\def\Re{{\rm Re}\,}
\def\Im{{\rm Im}\,}
\def\arr{\rightarrow}
\newcommand{\ee}{\mathsf{e}}
\newcommand{\zz}{\mathsf{z}}
\def\tt{\theta}
\def\lm{\lambda}
\def\s{\sigma}
\def\sd{\sigma_{\rm d}}
\def\sess{\sigma_{\rm ess}}
\def\ii{{\mathsf{i}}}
\def\p{\partial} 
\def\pp{\prime}
\def\dpp{{\prime\prime}}
\def\sfH{\mathsf{H}}
\def\dd{{\mathsf{d}}}
\def\uhr{\upharpoonright}
\def\frf{{\mathfrak f}}
\def\frj{{\mathfrak j}}
\newcounter{counter_a}
\newenvironment{myenum}{\begin{list}{{\rm(\roman{counter_a})}}%
{\usecounter{counter_a}
\setlength{\itemsep}{1.ex}\setlength{\topsep}{0.8ex}
\setlength{\leftmargin}{5ex}\setlength{\labelwidth}{5ex}}}{\end{list}}
\newcommand{\eg}{{\it e.g.}\,}
\newcommand{\ie}{{\it i.e.}\,}
\newcommand{\cf}{{\it cf.}\,}
\numberwithin{figure}{section}
\numberwithin{equation}{section}
\theoremstyle{plain}
\newtheorem*{thm*}{Theorem}
\newtheorem{thm}{Theorem}[section]
\newtheorem{hyp}{Hypothesis}[section]
\newtheorem{lem}[thm]{Lemma}
\newtheorem{prop}[thm]{Proposition}
\newtheorem{example}[thm]{Example}
\newtheorem{cor}[thm]{Corollary}
\newtheorem*{thmA}{Theorem A}
\newtheorem*{thmB}{Theorem B}
\newtheorem{dfn}[thm]{Definition}
\theoremstyle{remark}
\newtheorem{remark}[thm]{Remark}
\theoremstyle{plain}
\newcommand{\supp}{\mathrm{supp}\,}
\newcommand{\beu}{\begin{equation*}}
\newcommand{\eeu}{\end{equation*}}
\newcommand{\besu}{\begin{equation*}
\begin{aligned}}
\newcommand{\eesu}{\end{aligned}
\end{equation*}}
\newcommand{\bes}{\begin{equation}
\begin{aligned}}
\newcommand{\ees}{\end{aligned}
\end{equation}}
\newcommand\cD{\mathcal D}
\newcommand\fra{\mathfrak a}
\newcommand\eps{\varepsilon}
\newcommand\ov{\overline}
\newcommand\wt{\widetilde}
\newcommand\wh{\widehat}
\newcommand\void[1]{}
\def\ov{\overline}
\def\eps{\varepsilon}
      \def\dC{{\mathbb C}}
   \def\dN{{\mathbb N}}   
      \def\dR{{\mathbb R}}
\def\dS{{\mathbb S}}      
   \def\dZ{{\mathbb Z}}
\def\cD{{\mathcal D}}
\newcommand{\dom}{\mathrm{dom}\,}
\def\phi{\varphi}
\def\uhr{\upharpoonright}
\def\xx{{\mathsf{x}}}
\def\dd{{\mathsf{d}}}
\def\ii{{\mathsf{i}}}
\def\sfH{\mathsf{H}}
\def\sfT{\mathsf{T}}
\def\frf{{\mathfrak f}}
\def\frj{{\mathfrak j}}
\def\coLambda{{\Sigma \setminus \Lambda}}
\def\RcoLambda{{\dR^2\setminus\Lambda}}
\def\p{{\partial}}
\def\OpId{\mathsf{T}_{d,\omega}}
\def\OpIId{{\mathsf{H}^\Lambda_\omega}}
\def\frmId{{\fra_{d,\omega}}}
\def\frmIId{{\fra^\Lambda_\omega}}
\def\lm{{\lambda}}
\def\J{{\mathsf{J}_M}}
\def\frfR{{\frf_{D_R,\omega}^\Lambda}}
\def\bequ{\begin{equation}}
\def\eequ{\end{equation}}
\newtheorem*{openA}{Open Question A}
\newtheorem*{openB}{Open Question B}
\begin{document}
\title[{\footnotesize $\delta'$-interaction
supported on a non-closed curve in $\dR^2$}]{On absence of bound states for weakly attractive \boldmath{$\delta^\pp$}-interactions 
supported on non-closed curves in $\dR^2$}

\author{Michal Jex}
\address{Doppler Institute for Mathematical Physics and Applied
Mathematics, Czech Technical University in Prague,
B\v{r}ehov\'{a} 7, 11519 Prague, and Department of Physics,
Faculty of Nuclear Sciences and Physical Engineering, Czech
Technical University in Prague, B\v{r}ehov\'{a} 7, 11519 Prague,
Czechia}

\email{jexmicha@fjfi.cvut.cz}

\author{Vladimir Lotoreichik}
\address{Department of Theoretical Physics,
Nuclear Physics Institute, Czech Academy of Sciences, 250 68, \v{R}e\v{z} near Prague, Czechia}
\email{lotoreichik@ujf.cas.cz}

\begin{abstract}
	Let $\Lambda\subset\mathbb{R}^2$ be a non-closed
	piecewise-$C^1$ curve, which is either bounded
	with two free endpoints or unbounded with one free endpoint. 
	Let $u_\pm|_\Lambda \in L^2(\Lambda)$ be the traces
	of a function $u$ in the Sobolev space $H^1({\mathbb R}^2\setminus \Lambda)$  
	onto two faces of $\Lambda$.
	We prove that for a wide class of shapes of $\Lambda$ 
	the Schr\"odinger operator $\mathsf{H}_\omega^\Lambda$ 
	with $\delta^\prime$-interaction supported on $\Lambda$
	of strength $\omega \in L^\infty(\Lambda;\mathbb{R})$  associated with
	the quadratic form
	\[
		H^1(\mathbb{R}^2\setminus\Lambda)\ni u
		\mapsto
		\int_{\mathbb{R}^2}\big|\nabla u \big|^2 \mathsf{d} x
		 - \int_\Lambda \omega \big| u_+|_\Lambda - u_-|_\Lambda \big|^2  \mathsf{d} s
	\]
	has no negative spectrum provided that
	$\omega$ is pointwise majorized by a strictly positive function 
	explicitly expressed in terms of $\Lambda$. 
	If, additionally, the domain $\mathbb{R}^2\setminus\Lambda$ 
	is quasi-conical, we show that $\sigma(\mathsf{H}_\omega^\Lambda) = [0,+\infty)$.
	For a bounded curve $\Lambda$ in our class
	and non-varying interaction strength $\omega\in\mathbb{R}$
	we derive existence	of a constant $\omega_* > 0$ such that 
	$\sigma(\mathsf{H}_\omega^\Lambda) = [0,+\infty)$ for all $\omega \in (-\infty, \omega_*]$;
	informally speaking, bound states are absent
	in the weak coupling regime.
\end{abstract}

\keywords{Schr\"odinger-type operators, 
$\delta^\pp$-interactions, non-closed curves, negative spectrum, min-max principle, linear fractional transformations}

\maketitle

\section{Introduction}

In this paper we study the self-adjoint operator
corresponding to the formal differential expression
\[
	-\Delta - \omega\delta^\pp(\cdot - \Lambda),
	\qquad 
	\text{on}~\dR^2,
\]
with the $\delta^\pp$-interaction supported on a non-closed piecewise-$C^1$ curve $\Lambda\subset\dR^2$, which is either bounded with two free endpoints
or unbounded with one free endpoint,
here $\omega\in L^\infty(\Lambda;\dR)$ is called the strength of the interaction.
More precisely, for any function $u$ in the Sobolev space $H^1(\RcoLambda)$
its traces $u_\pm|_\Lambda$ onto two faces of $\Lambda$ 
turn out to be well-defined as functions in $L^2(\Lambda)$,
and employing the shorthand notation
$[u]_\Lambda := u_+|_\Lambda  - u_-|_\Lambda$ 
we introduce the following symmetric sesquilinear form
\begin{equation}\label{intro:form}
\begin{split}
	\fra_\omega^\Lambda[u,v] & := 
		(\nabla u, \nabla v)_{L^2(\dR^2;\dC^2)} - 
		(\omega [u]_\Lambda, [v]_\Lambda)_{L^2(\Lambda)},\\
	\dom \fra_\omega^\Lambda & := H^1(\RcoLambda),
\end{split}		 
\end{equation}
which is closed, densely defined, and semibounded
in the Hilbert space $L^2(\dR^2)$; see Proposition~\ref{prop:form}. 
Let $\OpIId$ be defined as the unique
self-adjoint operator representing the form $\fra_\omega^\Lambda$
in the usual manner. We regard $\OpIId$ 
as the \emph{Schr\"odinger 
operator with $\delta^\pp$-interaction of strength $\omega$
supported on $\Lambda$}.

The aim of this paper is to demonstrate 
a peculiar spectral property of $\OpIId$. Namely, we show 
absence of negative spectrum for $\OpIId$
under not too restrictive assumptions on the shape of $\Lambda$
and assuming that the strength $\omega$
is pointwise majorized 
by a strictly positive function explicitly
expressed in terms of the shape of $\Lambda$.
The important point to note here is that the discovered phenomenon is 
non-emergent for $\delta^\pp$-interactions supported on loops in $\dR^2$; 
\cf~\cite[Thm. 4.4]{BEL14_RMP}.

The basic geometric ingredient in our paper
is the concept of \emph{monotone curves}. 
A non-closed piecewise-$C^1$ curve $\Lambda\subset\dR^2$ is monotone if
it can be parametrized via the piecewise-$C^1$ mapping 
$\phi \colon (0, R) \arr \dR$, $R \in (0,+\infty]$, as 
\begin{equation}\label{intro:param}
	\Lambda = \big\{
				\xx_0 + (r\cos\phi(r), r\sin\phi(r)) \in \dR^2 
				\colon r \in (0,R) 
			  \big\};
\end{equation} 
here $\xx_0 \in\dR^2$ is fixed.
For example, a circular arc subtending an angle $\tt\le\pi$ is monotone,
whereas a circular arc subtending an angle $\tt >\pi$ is not.

In the next theorem, which is the first main result of our paper,
we provide a condition on $\omega$ ensuring 
absence of negative spectrum for the operator $\OpIId$ 
with $\Lambda$ being monotone.
The statement of Theorem A below is contained 
in Theorem~\ref{thm:absence2}, in Subsection~\ref{ssec:abs_main}.
\begin{thmA}
	Let a monotone piecewise-$C^1$ curve 
	$\Lambda\subset\dR^2$ be parametrized
	as in~\eqref{intro:param} via 
	$\phi \colon (0, R) \arr \dR$, $R \in (0,+\infty]$.
	Then the spectrum of $\OpIId$ satisfies
	\[
		\s(\OpIId) \subseteq [0,+\infty)\qquad \text{if}\quad
		\omega(r) \le \frac{1}{2\pi r\sqrt{1+(r\varphi^\pp(r))^2}},
		\quad\text{for}\quad r\in (0,R).
	\]		
	If $\omega$ is majorized as above and, additionally,
	the domain $\RcoLambda$ is quasi-conical, then 
	$\s(\OpIId) = [0, +\infty)$.
\end{thmA}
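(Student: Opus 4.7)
My plan is to establish non-negativity of $\fra_\omega^\Lambda$ directly by exploiting the polar structure adapted to a monotone curve; semiboundedness of $\OpIId$ then yields $\sigma(\OpIId)\subseteq[0,+\infty)$. For the reverse inclusion under quasi-conicality, I will construct explicit Weyl sequences supported in large balls away from $\Lambda$.

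Introduce polar coordinates $(r,\theta)$ centred at $\xx_0$. The monotonicity of the radial distance along $\Lambda$ is precisely what makes $\Lambda$ the graph $\{(r,\phi(r)):r\in(0,R)\}$; in particular, for each $r\in(0,R)$ the circle of radius $r$ about $\xx_0$ meets $\Lambda$ in exactly one point, at angle $\phi(r)$. In these coordinates,
\begin{equation*}
\int_{\dR^2}|\nabla u|^2\,\dd x
=\int_0^\infty\!\int_0^{2\pi}\!\Bigl(r|\p_r u|^2+\tfrac{1}{r}|\p_\theta u|^2\Bigr)\dd\theta\,\dd r,
\qquad \dd s=\sqrt{1+(r\phi'(r))^2}\,\dd r.
\end{equation*}
For a.e.\ $r\in(0,R)$, the restriction $u(r,\cdot)$ is a function on a circle with a slit at $\theta=\phi(r)$, and its trace jump is
\begin{equation*}
[u]_\Lambda(r)=u(r,\phi(r)^+)-u\bigl(r,\phi(r)+2\pi^-\bigr)
=-\int_0^{2\pi}\p_\theta u\bigl(r,\phi(r)+t\bigr)\,\dd t.
\end{equation*}
Cauchy--Schwarz and $2\pi$-periodicity then deliver the pointwise estimate $|[u]_\Lambda(r)|^2\le 2\pi\int_0^{2\pi}|\p_\theta u(r,\theta)|^2\,\dd\theta$. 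I would verify this identity first on a dense subclass of functions smooth in the polar variables with a prescribed jump across the slit, and then extend it to all of $H^1(\dR^2\setminus\Lambda)$ by continuity of the trace.

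Combining the two polar formulas with the pointwise hypothesis $\omega(r)\sqrt{1+(r\phi'(r))^2}\le(2\pi r)^{-1}$ gives
\begin{equation*}
\int_\Lambda\omega\,|[u]_\Lambda|^2\,\dd s
\le \int_0^R\frac{1}{2\pi r}\cdot 2\pi\!\int_0^{2\pi}|\p_\theta u|^2\,\dd\theta\,\dd r
\le \int_{\dR^2}|\nabla u|^2\,\dd x,
\end{equation*}
so $\fra_\omega^\Lambda[u,u]\ge 0$ for every $u\in H^1(\dR^2\setminus\Lambda)$, whence $\sigma(\OpIId)\subseteq[0,+\infty)$.

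For the quasi-conical assertion, pick balls $B(x_n,\rho_n)\subset\dR^2\setminus\Lambda$ with $\rho_n\to\infty$. Given $\lambda\ge 0$, choose $k\in\dR^2$ with $|k|^2=\lambda$ together with cutoffs $\chi_n\in C_c^\infty(B(x_n,\rho_n))$ equal to $1$ on a concentric subball and satisfying $\|\nabla\chi_n\|_\infty=O(\rho_n^{-1})$. Then $u_n(x):=\chi_n(x)e^{\rmi k\cdot x}$ lies in $\dom\OpIId$, vanishes near $\Lambda$, and so $\OpIId u_n=-\Delta u_n$; the standard estimate $\|(\OpIId-\lambda)u_n\|_{L^2}/\|u_n\|_{L^2}\to 0$ then places $\lambda$ in $\sigma(\OpIId)$. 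The subtle step of the whole proof is the first paragraph: making rigorous the polar decomposition of the quadratic form on $H^1(\dR^2\setminus\Lambda)$ and the identification of $[u]_\Lambda(r)$ with an angular integral across the slit, particularly near the endpoint $r\to 0^+$ where the polar change of variables degenerates. Once this identification is secured, the bound on $\omega$ is designed exactly so that the angular Dirichlet energy absorbs the interaction term with constant one.
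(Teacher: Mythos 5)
Your proof is correct and follows essentially the same route as the paper: polar coordinates centred at $\xx_0$, discarding the radial part of the Dirichlet energy, a circle-by-circle comparison of the angular energy with the jump term (using that a monotone curve meets each circle about $\xx_0$ exactly once), and the same cutoff-plane-wave Weyl sequences for the quasi-conical case. The one genuine difference is local: where you bound $|[u]_\Lambda(r)|^2\le 2\pi\int_0^{2\pi}|\p_\theta u(r,\theta)|^2\,\dd\theta$ directly by writing the jump as the integral of $\p_\theta u$ across the slit circle and applying Cauchy--Schwarz, the paper identifies the per-circle quantity with the quadratic form of a point $\delta'$-interaction on a loop of length $2\pi$ and strength $1/(2\pi)$ and invokes a separately proved non-negativity criterion for $d\omega\le 1$ (Proposition~\ref{prop:loop}); your Cauchy--Schwarz step is an elementary proof of exactly the critical case $d\omega=1$ of that criterion. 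The density issue you flag is handled in the paper by proving the inequality on the form core $F_\Lambda$ and then using the core property from Proposition~\ref{prop:form}, which is the natural way to complete your first paragraph.
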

%
Roughly speaking, a domain $\Omega\subset\dR^2$ is quasi-conical
if it contains a disc of arbitrary large radius; see Subsection~\ref{ssec:ess}
for details. In Proposition~\ref{prop:interval} we demonstrate that, 
in general, the operator $\OpIId$ may have negative spectrum if the 
$\delta^\pp$-interaction is ``sufficiently strong''.

Operators $\OpIId$ with non-varying strengths $\omega\in\dR$
are of special interest. One can derive from Theorem A 
that for a bounded monotone $\Lambda$ one can find a constant $\omega_* > 0$
such that 
\begin{equation}\label{intro:weak}
	\s(\OpIId) = [0,+\infty)
	\qquad	\text{for all}~\omega \in (-\infty, \omega_*];
\end{equation}
in other words, there are no bound states in the weak coupling regime.
Computation of  the largest constant $\omega_* > 0$ 
such that~\eqref{intro:weak} still holds 
presents a more delicate problem, which will be considered elsewhere.

In the formulation of the second main result of the paper we use
the notion of a \emph{linear fractional transformation (LFT)}.
The complex plane $\dC$ can be extended up to the \emph{Riemann sphere}
$\wh \dC := \dC\cup \{\infty\}$ with a suitable topology
and for $a,b,c,d\in\dC$ such that $ad - bc \ne 0$ 
one defines the LFT as
\[
	M \colon \wh\dC\arr\wh\dC,\qquad M(z) := \frac{az+b}{cz+d},
\]
with the exception of the points $z = \infty$ and $z = -d/c$ if $c \ne 0$,
which have to be treated separately;
see Subsection~\ref{ssec:lft}. The next theorem generalizes 
Theorem A to the case of  curves, which are images of monotone
curves under LFTs; the statement of this theorem is contained 
in Theorem~\ref{thm:absenceLFT}, in Subsection~\ref{ssec:abs_LFT}.
Here, we confine ourselves to non-varying interaction strengths only.
\begin{thmB}
	Let $\Lambda \subset \dR^2$ be a bounded piecewise-$C^1$ curve. 
	Suppose that there exists an LFT 
	$M\colon \wh\dC\arr\wh\dC$ such that
	$M(\infty), M^{-1}(\infty) \notin \Lambda$ and that
	$M^{-1}( \Lambda )$ is monotone.
	Then there exists a constant $\omega_* > 0$ such that
	\[
		\s(\OpIId) = [0,+\infty)
		\qquad
		\text{for all}~\omega \in (-\infty, \omega_*].
	\]	
\end{thmB}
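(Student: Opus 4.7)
The plan is to pull back the form $\fra^\Lambda_\omega$ via the LFT $M$ to a form on $\dR^2\setminus\widetilde\Lambda$, where $\widetilde\Lambda:=M^{-1}(\Lambda)$, and then invoke Theorem~A. Writing $p:=M^{-1}(\infty)$ and $q:=M(\infty)$, the assumption $p,q\notin\Lambda$ together with compactness of $\Lambda$ ensures that $\widetilde\Lambda$ is a compact piecewise-$C^1$ curve in $\dC\setminus\{p\}$; by hypothesis it is monotone. The essential ingredient from complex analysis is the two-dimensional conformal invariance of the Dirichlet integral: $|\nabla(u\circ M)(x)|^2=|\nabla u(M(x))|^2\cdot|M'(x)|^2$ exactly matches the Jacobian $|M'(x)|^2$ of the change of variables under $M$.

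\textbf{Form identity on a dense subclass.} For $u\in H^1(\RcoLambda)$ that is smooth, compactly supported, and vanishes on a neighborhood of $q$, the pullback $v:=u\circ M$ is smooth and compactly supported in $\dR^2\setminus(\widetilde\Lambda\cup\{p\})$, hence $v\in H^1(\dR^2\setminus\widetilde\Lambda)$. A direct change-of-variables computation using conformality yields
\[
\int_{\dR^2}|\nabla v|^2\,\dd x = \int_{\dR^2}|\nabla u|^2\,\dd y,\qquad
\int_\Lambda|[u]_\Lambda|^2\,\dd s = \int_{\widetilde\Lambda}|M'|\cdot|[v]_{\widetilde\Lambda}|^2\,\dd s,
\]
and hence, for constant $\omega\in\dR$, $\fra^\Lambda_\omega[u,u]=\fra^{\widetilde\Lambda}_{\widetilde\omega}[v,v]$ with $\widetilde\omega(x):=\omega\,|M'(x)|$ on $\widetilde\Lambda$. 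Because the singleton $\{q\}\subset\RcoLambda$ has vanishing $H^1$-capacity in $\dR^2$, the admissible $u$'s above are dense in $H^1(\RcoLambda)$, so verifying nonnegativity of $\fra^\Lambda_\omega$ on this dense set is sufficient.

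\textbf{Application of Theorem~A.} The monotone bounded curve $\widetilde\Lambda$ is parametrized as in~\eqref{intro:param} via $\varphi\colon(0,R)\to\dR$ with $R<\infty$. The majorant $M_*(r):=\bigl(2\pi r\sqrt{1+(r\varphi^\pp(r))^2}\bigr)^{-1}$ appearing in Theorem~A is continuous and strictly positive on $(0,R)$; it diverges as $r\to 0^+$ and remains bounded away from zero as $r\to R^-$ by piecewise-$C^1$ regularity of $\varphi$, so $c_*:=\inf_{r\in(0,R)}M_*(r)>0$. Similarly, $|M'|$ is continuous on the compact set $\widetilde\Lambda$, which is separated from the pole $p$, so $C:=\sup_{\widetilde\Lambda}|M'|<\infty$. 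Setting $\omega_*:=c_*/C>0$, for $\omega\in(0,\omega_*]$ one has $\widetilde\omega\le c_*\le M_*$ pointwise on $\widetilde\Lambda$, so Theorem~A gives $\fra^{\widetilde\Lambda}_{\widetilde\omega}\ge 0$ and hence $\fra^\Lambda_\omega\ge 0$; for $\omega\le 0$ the form is nonnegative trivially. Since $\Lambda$ is bounded, $\RcoLambda$ is quasi-conical, and the essential-spectrum analysis from Subsection~\ref{ssec:ess} yields $\sess(\OpIId)=[0,+\infty)$, giving finally $\s(\OpIId)=[0,+\infty)$.

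\textbf{Anticipated obstacle.} The delicate part is not the formal form identity but the density/approximation step that lifts it to all of $H^1(\RcoLambda)$. The pullback by $M$ does not preserve $L^2$ norms because of the two singular points $p,q$ of $M$ on $\wh\dC$, so operator conjugation fails directly; one must instead restrict to the dense subclass on which $u\circ M\in H^1(\dR^2\setminus\widetilde\Lambda)$ and pass to the limit in the form identity. The vanishing of the two-dimensional $H^1$-capacity of a point is the key ingredient enabling this density and hence the entire argument.
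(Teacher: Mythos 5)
Your proposal is correct and its overall strategy coincides with the paper's proof of Theorem~\ref{thm:absenceLFT}: pull the form back through the LFT using conformal invariance of the Dirichlet integral and the transformation rule for the curvilinear integral (the paper's Lemmas~\ref{lem:change1} and~\ref{lem:change2}), obtain the transformed strength $\wt\omega=\omega\sqrt{\J}=\omega|M^\pp|$ on $\Gamma=M^{-1}(\Lambda)$, and take $\omega_*=\omega_*(\Gamma)/\sup_{\zz\in\Gamma}\sqrt{\J(\zz)}$ --- exactly the constant appearing in the paper. The one genuine point of divergence is how you handle the fact that $v=u\circ M$ need not be square-integrable or compactly supported (the paper flags this explicitly in the remark after Lemma~\ref{lem:change1}). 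You force $v$ into $H^1(\dR^2\setminus\Gamma)$ by first cutting $u$ off near $M(\infty)$, restoring density via the vanishing $H^1$-capacity of a point in $\dR^2$, and then quote Theorem~A as a black box applied to the operator $\sfH^\Gamma_{\wt\omega}$. The paper never places $v$ in the form domain of any transformed operator: it restricts $v$ to the disc $D_R\supset\Gamma$, applies the purely variational disc estimate of Lemma~\ref{lem:estimate} to $v\uhr D_R$ (which only needs $v$ smooth up to the boundary on the bounded pieces $G_\pm$), and discards the nonnegative Dirichlet integral over the complement. That sidesteps the capacity argument entirely, at the cost of re-running the local lemma rather than citing Theorem~A verbatim. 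Both routes are sound; yours is marginally longer but makes the reduction to Theorem~A literal, and the capacity step you identify as the main obstacle is indeed the only ingredient your version needs beyond what the paper already establishes.
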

In the main body of the paper also an explicit formula for $\omega_*$ 
in the above theorem is provided. Using Theorem~B we can treat, \eg, 
any circular arc, since
it can be mapped via a suitable LFT to a subinterval of the 
straight line in $\dR^2$; see Example~\ref{example3}. 
One may even conjecture that for any bounded $\Lambda$ 
there exists an $\omega_* > 0$ such that
$\s(\OpIId) = [0,+\infty)$ for all $\omega  \in (-\infty, \omega_*]$.

Our proofs rely on the min-max principle 
applied to the form $\fra_\omega^\Lambda$ in~\eqref{intro:form}
on a suitable core. A further important ingredient in our
analysis is careful investigation of a  model
one-dimensional problem with a point $\delta^\pp$-interaction
on the loop. 

The results of this paper contribute to a prominent
topic in spectral theory:
existence/non-existence of weakly coupled bound
states for Schr\"odinger-type operators.
Absence of bound states
in the weak coupling regime holds for 
Schr\"odinger operators with regular potentials in space dimensions
$d \ge 3$  (but not for $d = 1,2$!); see \cite{S76}. Also such an effect
occurs for $\delta$-interactions supported on arbitrary compact hypersurfaces
in $\dR^3$ (see \cite{EF09}) and for $\delta$-interactions on 
compact non-closed curves in $\dR^3$ (see \cite{EK08}). However, for
$\delta$-interactions in $\dR^2$ supported on arbitrary compact curves
such an effect is non-existent~\cite{ET04,KL14}.

Schr\"odinger operators with $\delta^\pp$-interactions
supported on hypersurfaces are attractive from physical
point of view, because they exhibit rather unusual scattering
properties; \cf \cite[Chap. I.4]{AGHH05}.
These operators are also physically relevant
in photonic crystals theory~\cite{FK96}.
As a mathematical abstraction they were perhaps first studied 
in~\cite{AGS87, S88}, where interactions were supported on spheres.
A rigorous definition of such operators
with interactions supported on general hypersurfaces has been posed 
in~\cite[\S 7.2]{E08} as an open question.
Such Hamiltonians with interactions supported on closed hypersurfaces 
without free boundaries have been rigorously defined in~\cite{BLL13_AHP} 
using two approaches: via the theory of self-adjoint extensions
of symmetric operators and by means of form methods. 
Spectral properties of them were investigated
in several subsequent 
works~\cite{BEL14_RMP, BGLL15, EJ13, EJ14, EKh15, J15, LR15}. 
In the recent preprint~\cite{MPS15}
Schr\"odinger operators with $\delta^\pp$-interactions supported on non-closed curves and surfaces 
are defined via the theory of self-adjoint extensions and
their scattering properties are discussed.

Let us briefly outline the structure of the paper. 
Section~\ref{sec:prelim} presents some preliminaries:
Sobolev spaces, geometry of curves, linear fractional
transformations, and a model one-dimensional spectral problem. 
Section~\ref{sec:def} provides a rigorous definition of the
operator $\OpIId$ and a characterisation of its essential spectrum.
Section~\ref{sec:abs} contains proofs of our main results,
formulated in Theorems A and B, as well as some related results 
and examples. In Section~\ref{sec:open} final remarks are given
and two open questions are posed.
A couple of standard proofs of identities related
to LFTs are outsourced to Appendix~\ref{app:A}.

\subsection*{Notations}
By $D_R(\xx) := \{\xx\in \dR^2\colon |\xx - \xx_0| < R\}$ 
we denote the open disc of
the radius $R > 0$ with the center $\xx_0 \in\dR^2$. 
If such a disc is centered at the origin,
we use the shorthand notation $D_R := D_R(0)$. 
By definition we set $D_\infty := \dR^2$. 
For a self-adjoint operator $\sfT$ we denote by 
$\sess(\sfT)$, $\sd(\sfT)$, and $\s(\sfT)$  its essential, discrete,
and full  spectra, respectively. For an open set $\Omega\subset\dR^2$
the space of smooth compactly supported functions
and the first order order Sobolev space are denoted by
$\cD(\Omega)$ and by $H^1(\Omega)$, respectively.

\subsection*{Acknowledgements}

The authors are indebted to Jussi Behrndt and Pavel Exner
for their active interest in the preparation of this paper
and for many stimulating conversations. David Krej\v{c}i\v{r}\'{i}k
is acknowledged for a comment, which led to an improvement in the
definition of the operator.
MJ was supported by 
the Grant Agency of the Czech Technical University in Prague, 
grant No. SGS13/217/OHK4/3T/14.
VL was supported by the Austrian Science Fund (FWF)
under the project P~25162-N26.
Both the authors acknowledge the financial support by 
the Austria-Czech Republic co-operation grant  CZ01/2013
and by the Czech Science Foundation (GA\v{C}R)
under the project 14-06818S.

\section{Preliminaries}\label{sec:prelim}

This section contains some preliminary material that will be
used in the main part of this paper. In Subsection~\ref{ssec:Sobolev}
we provide basic facts on the Sobolev space $H^1$, in particular,
we define the Sobolev space $H^1(\RcoLambda)$ for a
non-closed Lipschitz curve $\Lambda$. In Subsection~\ref{ssec:curves} we
introduce the notions of a piecewise-$C^1$ curve and of a monotone curve. 
The concept of the linear fractional
transformation is discussed in Subsection~\ref{ssec:lft}.
A model spectral problem for one-dimensional Schr\"odinger operator
with one-center $\delta^\pp$-interaction on a loop is considered
in Subsection~\ref{ssec:modelproblem} and 
a sufficient condition for absence of negative eigenvalues 
in this spectral problem is established.

\subsection{Sobolev spaces}\label{ssec:Sobolev}

Let $\Omega \subset\dR^2$ be a simply connected 
\emph{Lipschitz domain} from the class 
described in~\cite[Ch. VI]{St}. This class of Lipschitz domains
includes (as a subclass) Lipschitz domains with compact boundaries
as in~\cite[Ch. 3]{McL}, \emph{hypographs} of uniformly
Lipschitz functions, and some other domains with non-compact
boundaries. In what follows the Hilbert spaces $L^2(\Omega)$,
$L^2(\Omega;\dC^2)$, $L^2(\partial\Omega)$, and $H^1(\Omega)$ are
defined as usual; see \eg \cite[Ch. 3]{McL} and \cite{M87}.
For the sake of brevity we denote the scalar products in both
$L^2(\Omega)$ and $L^2(\Omega;\dC^2)$ by $(\cdot,\cdot)_\Omega$
without any danger of confusion. The scalar product in $L^2(\p\Omega)$
is abbreviated by $(\cdot,\cdot)_{\p\Omega}$.
The space of functions on $\ov\Omega$ smooth up to the boundary
$\p\Omega$ is defined as
\[
	\cD(\ov\Omega) := \big\{ u|_{\Omega} \colon u\in \cD(\dR^2)\big\}.
\]
By~\cite[Thm. 3.29]{McL}, see also \cite{M87, St}, the space
$\cD(\ov\Omega)$ is dense in both $L^2(\Omega)$ and $H^1(\Omega)$.
The natural restriction mapping 
$\cD(\ov\Omega)\ni u\mapsto u|_{\p\Omega}\in L^2(\p\Omega)$ 
can be extended by continuity up to the whole
space $H^1(\Omega)$; see \eg \cite[Thm. 3.37]{McL} and \cite{M87}.
The corresponding extension by continuity 
$H^1(\Omega)\ni u \mapsto u|_{\p\Omega} \in L^2(\p\Omega)$
is called \emph{the trace mapping}.
The statement of the first lemma in this subsection
appears in several monographs and papers in various forms; 
see \eg \cite[Lem. 2.6]{BEL14_RMP} and \cite[Lem. 2.5]{GM08}
for two different proofs of this statement.

\begin{lem}\label{lem:BEL} 
	Let $\Omega\subset\dR^2$ be a Lipschitz domain. Then
	for any $\eps > 0$ there exists a constant $C(\eps) > 0$ 
	such that
	\[
		\big\|u|_{\p\Omega}\big\|^2_{\p\Omega} 
		\le 
		\eps\|\nabla u\|^2_\Omega + C(\eps)\|u\|^2_\Omega
	\]
	holds for all $u\in H^1(\Omega)$.
\end{lem}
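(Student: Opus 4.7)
The plan is to prove the inequality by a vector-field divergence argument, reducing the boundary integral to a volume integral and then using Young's inequality with a small parameter.

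First I would use density to restrict attention to smooth functions. Since $\mathcal{D}(\overline\Omega)$ is dense in $H^1(\Omega)$ and both sides of the desired inequality depend continuously on $u$ in the $H^1$-norm (the left-hand side by continuity of the trace mapping), it suffices to establish the estimate for $u \in \mathcal{D}(\overline\Omega)$ and then extend by continuity.

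Next I would construct an auxiliary vector field. Using the Stein-class Lipschitz structure of $\Omega$, I want a vector field $F \in L^\infty(\Omega;\dR^2)$ with $\nabla\cdot F \in L^\infty(\Omega)$ and a constant $c_0 > 0$ such that $F\cdot\nu \ge c_0$ almost everywhere on $\partial\Omega$, where $\nu$ denotes the outward unit normal. For a hypograph of a Lipschitz function one may take $F$ constant and transverse to the graph; for general domains in the Stein class this is done by gluing such local choices via a uniformly locally finite partition of unity, the existence of which is guaranteed by the uniform local Lipschitz parametrisation of $\partial\Omega$. This construction is the step I expect to require the most care, since $\partial\Omega$ may be unbounded and we need \emph{uniform} transversality and uniform bounds on $F$ and $\nabla\cdot F$.

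Granted such an $F$, the divergence theorem applied to the compactly supported field $|u|^2 F$ yields, for $u\in\mathcal{D}(\overline\Omega)$,
\[
	c_0 \big\| u|_{\p\Omega} \big\|_{\p\Omega}^2
	\le \int_{\p\Omega} |u|^2 (F\cdot \nu) \, \dd s
	= \int_\Omega \big( |u|^2 \, \nabla\cdot F + 2\Re(\ov u \, F\cdot \nabla u) \big) \, \dd x.
\]
Estimating $|\nabla\cdot F|$ and $|F|$ by their $L^\infty$-norms and applying the Cauchy-Schwarz inequality to the cross term gives
\[
	c_0 \big\| u|_{\p\Omega} \big\|_{\p\Omega}^2
	\le \|\nabla\cdot F\|_\infty \|u\|_\Omega^2 + 2\|F\|_\infty \|u\|_\Omega \|\nabla u\|_\Omega.
\]

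Finally I would use Young's inequality $2ab \le \delta a^2 + \delta^{-1} b^2$ on the cross term: for any $\delta > 0$,
\[
	2\|F\|_\infty \|u\|_\Omega \|\nabla u\|_\Omega
	\le \delta \|\nabla u\|_\Omega^2 + \delta^{-1}\|F\|_\infty^2 \|u\|_\Omega^2.
\]
Given $\eps > 0$, choose $\delta = c_0 \eps$; dividing the resulting inequality by $c_0$ yields the bound with
\[
	C(\eps) = \frac{\|\nabla\cdot F\|_\infty}{c_0} + \frac{\|F\|_\infty^2}{c_0^2 \eps}.
\]
A density argument then extends the inequality from $\mathcal{D}(\overline\Omega)$ to all of $H^1(\Omega)$.
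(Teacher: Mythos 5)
Your argument is correct, but note that the paper does not actually prove this lemma: it is quoted from the literature, with \cite{BEL14_RMP} and \cite{GM08} cited as giving two different proofs. Your vector-field argument is essentially the proof of \cite[Lem.~2.5]{GM08}: reduce to $u\in\cD(\ov\Omega)$ by density, apply the divergence theorem to $|u|^2F$ for a bounded Lipschitz field $F$ uniformly transversal to $\p\Omega$, and absorb the cross term by Young's inequality with the weight $\delta=c_0\eps$; your bookkeeping of the constants is right. The alternative route of \cite[Lem.~2.6]{BEL14_RMP} first establishes the multiplicative trace estimate $\|u|_{\p\Omega}\|^2_{\p\Omega}\le C\|u\|_{H^1(\Omega)}\|u\|_{\Omega}$ and then applies Young's inequality, which avoids constructing $F$ but leans on the $H^{1/2}$ trace theorem. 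The only point in your write-up that is genuinely nontrivial and left as a sketch is the existence of a \emph{uniformly} transversal field with $\|F\|_\infty$, $\|\nabla\cdot F\|_\infty<\infty$ and $F\cdot\nu\ge c_0>0$ for the possibly unbounded domains of the Stein class; you correctly flag this, and it does hold because these domains come with uniform local Lipschitz parametrisations of the boundary with bounded overlap, so the gluing you describe goes through. One further small remark: your appeal to continuity of the trace map in the density step is harmless here since the paper takes that continuity as given, but it is cleaner (and avoids any appearance of circularity) to observe that the inequality for smooth $u$ itself implies boundedness of the trace on a dense set, hence the extension and the inequality for all of $H^1(\Omega)$ follow simultaneously.
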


The following hypothesis will be used throughout the  paper.
\begin{hyp}\label{hyp:curve}
	Let $\Omega_+\subset\dR^2$ 
	be a simply connected Lipschitz domain from the above class,
	whose complement $\Omega_- := \dR^2\setminus\ov\Omega_+$ 
	is a Lipschitz domain from the same class.
	Set $\Sigma := \p\Omega_+ = \p\Omega_-$
	and suppose that $\Lambda\subset\Sigma$ is a connected subarc 
	of $\Sigma$, which is not necessarily bounded if $\Sigma$ is unbounded.
\end{hyp} 

Obviously, the orthogonal sum $H^1(\Omega_+)\oplus H^1(\Omega_-)$ 
is a Hilbert space with respect to the scalar product
\[
	(u_+\oplus u_-,v_+\oplus v_- )_1 
	:=
	(u_+,v_+)_{H^1(\Omega_+)} + (u_-,v_-)_{H^1(\Omega_-)}, 
	\quad  
	u_\pm, v_\pm
	\in
	H^1(\Omega_\pm).
\]
The norm associated to this scalar product is denoted by
$\|\cdot\|_1$. Let us define the jump of the trace as
\[
	[u]_\Sigma := u_+|_\Sigma - u_-|_\Sigma,
	\qquad 
	u = u_+\oplus u_-\in H^1(\Omega_+) \oplus H^1(\Omega_-).
\]
The Hilbert space $L^2(\Sigma)$ can be decomposed into the orthogonal sum 
\[
	L^2(\Sigma) = L^2(\Lambda)\oplus L^2(\coLambda).
\]
The scalar products in $L^2(\Lambda)$ and $L^2(\coLambda)$
will further be
denoted by $(\cdot,\cdot)_\Lambda$ and $(\cdot,\cdot)_\coLambda$. 
Clearly enough, the restrictions of $u_\pm|_\Sigma$ for a
$u_\pm \in H^1(\Omega_\pm)$ to the arcs $\coLambda$ and $\Lambda$
satisfy $u_\pm|_\coLambda \in L^2(\coLambda)$ and 
$u_\pm|_{\Lambda} \in L^2(\Lambda)$. 
Let us also introduce the notations
\[
	[u]_\bullet := u_+|_\bullet - u_-|_\bullet,
	\quad
	\bullet = \Lambda, \coLambda,
	\quad 
	u = u_+ \oplus u_- \in H^1(\Omega_+)\oplus H^1(\Omega_-).
\]	
The linear space
\begin{equation}\label{FLambda} 
	F_\Lambda := 
	\big\{
		u\in\cD(\ov{\Omega_+})\oplus\cD(\ov{\Omega_-})
		\colon [u]_\coLambda = 0
	\big\}
\end{equation}
is a subspace of the Hilbert space 
$H^1(\Omega_+)\oplus H^1(\Omega_-)$,
and its closure in  $H^1(\Omega_+)\oplus H^1(\Omega_-)$
\begin{equation}\label{H1} 
	H^1(\RcoLambda) := \ov{F_\Lambda}^{\|\cdot\|_1}	
\end{equation}
is itself a Hilbert space with respect to the same scalar product 
$(\cdot,\cdot)_1$.

\begin{remark}
	The above construction of the space $H^1(\RcoLambda)$ can
	easily be translated to the higher space dimensions, 
	in which case $\Lambda$ will be a hypersurface 
	with free boundary (open hypersurface).
\end{remark}

\begin{remark}
	The space $H^1(\RcoLambda)$ can also be defined in an
	alternative way. The set $\RcoLambda$ is an open subset
	of $\dR^2$. Hence, one can define for any $u\in L^2(\dR^2)$ 
	its weak partial derivatives $\p_1 u$ and $\p_2 u$ 
	by means of the test functions in $\cD(\RcoLambda)$; 
	see \eg \cite[Ch. 3]{McL}. Then the space $H^1(\RcoLambda)$ is given by
	\[
		H^1(\RcoLambda) = 
		\big\{u\in L^2(\dR^2)\colon \p_1 u, \p_2 u\in L^2(\dR^2)\big\}.
	\]
	We are not aiming to provide here an argumentation that 
	this new definition gives rise to the same space as in~\eqref{H1}. 
	It is only important here that the equivalence
	of these definitions automatically implies that the space
	$H^1(\RcoLambda)$
	is independent of the continuation of the arc $\Lambda$ up to $\Sigma$.
	Another way of verifying the independence of the space 
	$H^1(\RcoLambda)$
	from a continuation of $\Lambda$ can be found in~\cite{D12}.
\end{remark}

Next proposition collects some useful properties of the traces of
functions in $H^1(\RcoLambda)$ onto $\coLambda$ and onto $\Lambda$.

\begin{prop}\label{prop:trace}
	Let the curves $\Sigma, \Lambda\subset\dR^2$, and the domains 
	$\Omega_\pm\subset\dR^2$ be as in Hypothesis~\ref{hyp:curve}. 
	Let the Hilbert space $(H^1(\RcoLambda), (\cdot,\cdot)_1)$ be 
	as in~\eqref{H1}. Then the following statements hold.
	\begin{myenum}
		\item $[u]_\coLambda = 0$ for all $u\in H^1(\RcoLambda)$.
		\item For any $\eps >0$ there exists a constant $C(\eps) >0$
			  such that
		  	\[
		  		\big\| [u]_\Lambda \big \|^2_\Lambda
		  		\le 
			  	\eps\|\nabla u\|^2_{\dR^2} + C(\eps)\|u\|^2_{\dR^2}
			  \]
		  for all $u\in H^1(\RcoLambda)$.
	\end{myenum}
\end{prop}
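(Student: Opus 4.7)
The plan is to deduce both claims from the tools assembled in Subsection~\ref{ssec:Sobolev}: item~(i) will follow from the continuity of the trace maps combined with the density definition of $H^1(\RcoLambda)$, and item~(ii) will follow from Lemma~\ref{lem:BEL} applied separately on the two sides $\Omega_+$ and $\Omega_-$.

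For~(i), I would pick $u \in H^1(\RcoLambda)$ and use~\eqref{H1} to choose a sequence $(u_n) \subset F_\Lambda$ with $u_n \to u$ in the norm $\|\cdot\|_1$. Writing $u_n = u_n^+ \oplus u_n^-$ and $u = u_+ \oplus u_-$, continuity of the trace operators $H^1(\Omega_\pm) \to L^2(\Sigma)$ (valid by Hypothesis~\ref{hyp:curve} since $\Sigma = \p\Omega_\pm$) yields $u_n^\pm|_\Sigma \to u_\pm|_\Sigma$ in $L^2(\Sigma)$. Restriction to the subarc $\coLambda$ is bounded from $L^2(\Sigma)$ to $L^2(\coLambda)$, so $[u_n]_\coLambda \to [u]_\coLambda$ in $L^2(\coLambda)$. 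Since every $u_n \in F_\Lambda$ satisfies $[u_n]_\coLambda = 0$ by definition~\eqref{FLambda}, the limit $[u]_\coLambda$ must vanish.

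For~(ii), I would apply Lemma~\ref{lem:BEL} to $u_+ \in H^1(\Omega_+)$ and to $u_- \in H^1(\Omega_-)$, with the parameter $\eps/4$, obtaining
\[
	\big\|u_\pm|_{\p\Omega_\pm}\big\|^2_{\p\Omega_\pm}
	\le
	\tfrac{\eps}{4}\|\nabla u_\pm\|^2_{\Omega_\pm} + C(\eps/4)\|u_\pm\|^2_{\Omega_\pm}.
\]
Since $\Lambda \subset \Sigma = \p\Omega_\pm$, the same bound holds with $\|u_\pm|_\Lambda\|^2_\Lambda$ on the left-hand side. The elementary estimate $\|[u]_\Lambda\|^2_\Lambda \le 2\|u_+|_\Lambda\|^2_\Lambda + 2\|u_-|_\Lambda\|^2_\Lambda$ then lets me add the two inequalities and invoke $\|\nabla u\|^2_{\dR^2} = \|\nabla u_+\|^2_{\Omega_+} + \|\nabla u_-\|^2_{\Omega_-}$ together with $\|u\|^2_{\dR^2} = \|u_+\|^2_{\Omega_+} + \|u_-\|^2_{\Omega_-}$, valid because $\Sigma$ has two-dimensional Lebesgue measure zero, to arrive at the claimed bound with $C(\eps) := 2C(\eps/4)$.

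I do not expect a serious obstacle: both items reduce to standard density and continuity manipulations once one accepts the definition~\eqref{H1} and Lemma~\ref{lem:BEL}. The only point requiring some care is the bookkeeping between norms on $\dR^2$ and on the disjoint pieces $\Omega_\pm$, and the use of the \emph{common} boundary $\p\Omega_+ = \p\Omega_- = \Sigma$ forced by Hypothesis~\ref{hyp:curve}, which guarantees that the traces $u_\pm|_\Lambda$ both sit in the same space $L^2(\Lambda)$ so that $[u]_\Lambda$ is meaningful.
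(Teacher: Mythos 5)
Your proposal is correct and follows essentially the same route as the paper: item (i) by density of $F_\Lambda$ and continuity of the trace maps, item (ii) by applying Lemma~\ref{lem:BEL} on each of $\Omega_\pm$ and summing. The only cosmetic difference is that the paper invokes item (i) to write $\|[u]_\Lambda\|^2_\Lambda = \|[u]_\Sigma\|^2_\Sigma$ before estimating, whereas you bound $\|u_\pm|_\Lambda\|^2_\Lambda$ directly by $\|u_\pm|_\Sigma\|^2_\Sigma$ using $\Lambda\subset\Sigma$; both are immediate and lead to the same estimate.
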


\begin{proof}
	(i) 
	It can be easily checked that the continuity of the trace mappings
	\[
		H^1(\Omega_\pm)\ni u_\pm \mapsto u_\pm|_\Sigma \in L^2(\Sigma)
	\]
	implies that the mapping
	\[
		H^1(\Omega_+)\oplus H^1(\Omega_-)\ni u \mapsto [u]_\coLambda
		\in L^2(\coLambda)
	\]
	is well-defined and continuous.
	Note that for any $u\in H^1(\RcoLambda)$ there exists an
	approximating sequence $(u_n)_n\subset F_\Lambda$ (\cf \eqref{H1}) 
	such that $\|u_n - u\|_1\arr 0$ as 
	$n\arr \infty$. Hence, we obtain
	\[	
		[u]_\coLambda = \lim_{n\arr\infty}[u_n]_\coLambda  = 0.
	\]		

	(ii) By Lemma~\ref{lem:BEL} for any $\eps > 0$ there exist
	constants $C_\pm(\eps) > 0$ such that
	\begin{equation}\label{bounds} 
		\|u_\pm|_\Sigma\|^2_\Sigma
		\le 
		(\eps/2)\|\nabla u_\pm\|^2_{\Omega_\pm}
		+
		C_\pm(\eps)\|u_\pm\|^2_{\Omega_\pm}
	\end{equation}
	for all $u \in H^1(\Omega_+)\oplus H^1(\Omega_-)$. 
	Set then $C(\eps) := \max\{2C_+(\eps),2C_-(\eps)\}$.
	Using the result of item (i) and the bounds~\eqref{bounds} we obtain
	that for any $\eps > 0$ and any 
	$u = u_+\oplus u_-\in H^1(\RcoLambda)
	\subset H^1(\Omega_+)\oplus H^1(\Omega_-)$
	holds
	\[
	\begin{split}
		\big\|[u]_\Lambda\big\|^2_\Lambda 
		& = \big\|[u]_\Sigma\big\|^2_\Sigma \le 	
		     2\|u_+|_\Sigma\|^2_\Sigma + 
			 2\|u_-|_\Sigma\|^2_\Sigma\\
		& \le \eps\|\nabla u_+\|_{\Omega_+}^2
				+\eps\|\nabla u_-\|_{\Omega_-}^2
				+ 2C_+(\eps)\|u_+\|^2_{\Omega_+} 
				+ 2C_-(\eps)\|u_-\|^2_{\Omega_-}\\
		& \le \eps\|\nabla u\|^2_{\dR^2}
				+ C(\eps)\|u\|^2_{\dR^2}.
	\end{split}
	\]
\end{proof}

\begin{remark}
	For $\omega_1,\omega_2\in L^\infty(\Lambda;\dR)$
	by writing $\omega_1 \le \omega_2$ we will always implicitly mean
	that $\omega_2 - \omega_1 \ge 0$ almost everywhere.
\end{remark}

\subsection{On curves in $\dR^2$}\label{ssec:curves}

We begin this subsection by defining the notion of a piecewise-$C^1$ curve.
It should be emphasized that, especially for unbounded curves,
definition of a piecewise-$C^1$ curve is non-unique
in the mathematical literature.

\begin{dfn}\label{def:C1}
	A non-closed curve $\Lambda\subset \dR^2$
	satisfying Hypothesis~\ref{hyp:curve} is called  \emph{piecewise-$C^1$}
	if it can be parametrized via a piecewise-$C^1$ mapping 
	\begin{equation}\label{lambda}
		\lm\colon I\arr \dR^2, \qquad \lm(s) := (\lm_1(s),\lm_2(s)),
		\quad I := (0,L),~L \in (0,+\infty],
	\end{equation}
	such that $\lm(I) = \Lambda$ and $\lm$ is injective. 
	If, moreover, $|\lm^\pp(s)|= 1$ for almost all
	$s\in I$, then such a parametrization is called~\emph{natural}
	and $L$ is then called the length of $\Lambda$.
\end{dfn}
We require in the above definition, that the curve $\Lambda$
satisfies Hypothesis~\ref{hyp:curve}, 
to avoid increasing oscillations at infinity for unbounded curves.

Further, we  proceed to define a (non-standard)
concept of a~\emph{monotone curve}.
The authors have not succeeded to find a common name for this concept
in the literature on geometry.

\begin{dfn}\label{def:monotone}
	A piecewise-$C^1$ curve $\Lambda\subset \dR^2$
	is called \emph{monotone} if it can be parametrized 
	via a piecewise-$C^1$ mapping 
	$\phi\colon (0,R)\arr \dR$ with $R \in (0,+\infty]$ such that
	\[
		\Lambda 
		= 
		\big\{
			\xx_0 + (r\cos\phi(r), r\sin\phi(r)) \in\dR^2
			\colon r\in (0,R)
		\big\}
	\]
	with some fixed $\xx_0\in\dR^2$.	
\end{dfn}

Informally speaking, 
a curve $\Lambda$ is monotone if the distance (measured
in $\dR^2$) from one of its endpoints is always increasing
when travelling along $\Lambda$ from this endpoint
towards another endpoint or towards infinity.

\begin{remark}\label{rem:omegar}
	For a curve $\Lambda\subset\dR^2$ as in Definition~\ref{def:monotone}
	any function $\omega\in L^\infty(\Lambda)$ can be viewed
	as a function of the argument $r\in (0,R)$.
\end{remark}

\subsection{Linear fractional transformations}\label{ssec:lft}

For later purposes we introduce linear fractional transformations (LFT)
and state several useful properties of them. 
To work with LFT it is more convenient to deal with the
\emph{extended complex plane (Riemann sphere)} $\wh\dC := \dC\cup\{\infty\}$ 
rather than the usual complex plane. The complex plane itself
as a subset of $\wh\dC$ can be naturally identified with the Euclidean
plane $\dR^2$ and occasionally we will use this identification. 

For the purpose of convenience the extended complex plane $\wh\dC$ 
is endowed with a suitable topology: 
a sequence $(z_n)_n \in \wh\dC$ converges to $z\in\wh\dC$ if 
one of the following  conditions holds:
\begin{myenum}
	\item	$z = \infty$ and there exists $N\in\dN$ such that
			$z_n = \infty$ for all $n \ge N$;
	\item 	$z = \infty$ and any infinite
			subsequence $(z_{n_k})_k\subset \dC$ of $(z_n)_n$ satisfies	
			$\lim\limits_{k \arr \infty} |z_{n_k}| = \infty$;
	\item	$z \in\dC$, there exists $N\in\dN$ such that
			$z_n \ne \infty$ for all $n \ge N$, 
			and $\lim\limits_{n\arr \infty} z_n = z$ 
			in the sense of convergence in $\dC$.
\end{myenum}			
This definition of topology can also be easily reformulated in terms of
open sets.
The above topology on $\wh\dC$ is equivalent to the topology of $\dS^2$ 
(unit sphere in $\dR^3$). 
A natural homeomorphism between $\wh\dC$ and $\dS^2$ is called		
\emph{stereographic projection}; see \eg \cite[\S 6.2.3]{Krantz}.
%

For $a,b,c,d\in\dC$ such that $ad - bc \ne 0$ the mapping
$M\colon \wh\dC\arr\wh\dC$ is an LFT if one of the two 
conditions holds:
\begin{myenum}
	\item	$c = 0$, $d \ne 0$, $M(\infty) := \infty$,
			and $M(z) := (a/d)z + (b/d)$ for $z\in\dC$.   
	\item	$c \ne 0$, $M(\infty) := a/c$, $M(-d/c) := \infty$,
			and $M(z) := \frac{az+b}{cz+d}$ for $z\in\dC$, $z\ne -d/c$.   
\end{myenum}			
The following  statement can be found in~\cite[\S 6.2.3]{Krantz}.
%
\begin{prop}\label{prop:LFT}
	Any LFT $M\colon \wh\dC\arr\wh\dC$  
	is a homeomorphism with respect to the above topology on $\wh\dC$
	and its inverse $M^{-1}$ is also an LFT.
	The composition $M_1\circ M_2$ of two LFTs $M_1,M_2$ is an LFT as well.
\end{prop}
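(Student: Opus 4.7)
The plan is to reduce all three claims to elementary $2\times 2$ matrix algebra by associating to every LFT $M(z)=(az+b)/(cz+d)$ the matrix $A_M:=\bigl(\begin{smallmatrix} a & b \\ c & d\end{smallmatrix}\bigr)\in GL_2(\dC)$; this assignment is well-defined modulo a nonzero scalar, since $A$ and $\lm A$ induce the same map on $\wh\dC$. Closure under composition, closure under inversion, and the homeomorphism property will then all follow from a short amount of matrix arithmetic combined with a finite case check at the exceptional points $\infty$ and $-d/c$.

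For the algebraic part I would first verify by direct substitution, on the open subset of $\dC$ where no denominator vanishes, that $M_1\circ M_2$ agrees with the rational map associated to the product matrix $A_{M_1}A_{M_2}$. Since $\det(A_{M_1}A_{M_2}) = \det A_{M_1}\cdot\det A_{M_2}\ne 0$, this product yields a bona fide LFT $N$, and only the behaviour at $\infty$ and the relevant poles remains to be checked; each such case is settled immediately from the definitions of $M(\infty)$ and $M(-d/c)$ given in items (i)--(ii) of the paper. Applying the same device to $A_M^{-1}\in GL_2(\dC)$ produces an LFT $\widetilde M$, and the composition rule just established gives $M\circ\widetilde M = \widetilde M\circ M = \mathrm{id}_{\wh\dC}$, identifying $M^{-1}=\widetilde M$ as an LFT.

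For the homeomorphism property I would use that on $\dC\setminus\{-d/c\}$ (or on all of $\dC$ when $c=0$) the map $M$ is a rational function, hence continuous in the Euclidean topology, which coincides with the subspace topology inherited from $\wh\dC$. What remains is continuity at $\infty$ and, when $c\ne 0$, at $-d/c$. These are handled by direct verification against the three-clause definition of convergence in $\wh\dC$: if $c\ne 0$ and $z_n\arr\infty$, then $M(z_n)=(a+b/z_n)/(c+d/z_n)\arr a/c = M(\infty)$; whereas if $z_n\arr -d/c$ with $z_n\ne -d/c$, then $cz_n+d\arr 0$ while $az_n+b\arr (bc-ad)/c\ne 0$, forcing $|M(z_n)|\arr\infty = M(-d/c)$. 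The case $c=0$ is disposed of by the analogous (but easier) computation $M(z_n)=(a/d)z_n+(b/d)\arr\infty$ as $z_n\arr\infty$. Continuity of $M^{-1}$ is obtained from the same analysis applied to $A_M^{-1}$, so $M$ is a continuous bijection with continuous inverse, hence a homeomorphism.

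The main obstacle is not analytic but purely bookkeeping: the case distinction $c=0$ versus $c\ne 0$ must be carried through for both $M_1$ and $M_2$ in the composition argument, and for $M$ and $M^{-1}$ separately in the continuity argument. A disciplined enumeration of these cases at the outset -- rather than attempting a single unified formula -- prevents the argument from fragmenting into unmanageable subcases and keeps the whole proof short.
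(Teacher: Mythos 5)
Your proof is correct. The paper offers no proof of this proposition at all---it simply refers to \cite[\S 6.2.3]{Krantz}---so there is nothing to compare step by step; your argument via the matrix representation in $GL_2(\dC)$ (composition as matrix product, inverse as matrix inverse, plus the finite case check at $\infty$ and at $-d/c$) is the standard one and is exactly what the cited reference carries out. The only point worth flagging is that your continuity checks are sequential; this suffices because the topology on $\wh\dC$ is metrizable, being homeomorphic to $\dS^2$ via stereographic projection, as the paper itself notes.
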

%
It is convenient to introduce
$M_1(x,y) := \Re M(x + \ii y)$ and $M_2(x,y) := \Im M(x+ \ii y)$.
Then \emph{Cauchy-Riemann equations}
\begin{equation}\label{CR}
	\p_x M_1 = \p_y M_2,\qquad \p_x M_2 = -\p_y M_1, 
\end{equation}
hold pointwise in $\dR^2$ except the point $M^{-1}(\infty)$.
In view of these equations the Jacobian $\J$ of
the mapping $M$ can be computed (again except the point $M^{-1}(\infty)$) 
by the formulae
\begin{equation}\label{J}
	\J = (\p_x M_1)^2 + (\p_y M_1)^2 = (\p_x M_2)^2 + (\p_y M_2)^2;
\end{equation}
also the following relation turns out to be useful
\begin{equation}\label{relation}
	\langle\nabla M_1,\nabla M_2\rangle
			= \p_x M_1\p_x M_2 + \p_y M_1\p_y M_2 = 0;
\end{equation}
\ie the vectors $\nabla M_1$ and $\nabla M_2$
are orthogonal to each other.

Next auxiliary lemma is of purely technical nature
and is proven for convenience of the reader.

\begin{lem}\label{lem:M}
	Let $M$ be an LFT with the Jacobian $\J$. 
	Then for any $x\in\dR^2$, $x \ne M^{-1}(\infty)$, 
	and any function $u \colon\dR^2\simeq\dC\arr \dC$ 
	differentiable at the point $M(x)$
	\[
		|(\nabla v)(x)|^2 = |(\nabla u)(M(x))|^2 \J(x)
	\]
	holds with $v = u\circ M$. 	
\end{lem}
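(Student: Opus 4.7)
The plan is to reduce the claim to a direct chain-rule computation and then invoke the two consequences of the Cauchy--Riemann equations stated in~\eqref{J} and~\eqref{relation}. Identifying $\dR^2 \simeq \dC$ via $(s,t)\leftrightarrow s+\ii t$, I view $u$ as a complex-valued function of two real variables $(s,t)$ and the LFT as a map $x = (x_1,x_2)\mapsto (M_1(x),M_2(x))$. The assumption $x\ne M^{-1}(\infty)$ guarantees that $M$ is smooth at $x$, so both $\p M_1/\p x_i$ and $\p M_2/\p x_i$ are well defined.

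First I would apply the chain rule to $v(x) = u(M_1(x),M_2(x))$, setting $a := (\p_s u)(M(x))$ and $b := (\p_t u)(M(x))$, to obtain
\[
	\p_{x_i}v(x) = a\,\p_{x_i}M_1(x) + b\,\p_{x_i}M_2(x),\qquad i=1,2.
\]
Squaring the moduli, summing over $i=1,2$, and collecting terms gives
\[
	|\nabla v(x)|^2 = |a|^2 |\nabla M_1(x)|^2 + |b|^2 |\nabla M_2(x)|^2 + 2\Re(a\bar b)\,\langle \nabla M_1(x),\nabla M_2(x)\rangle.
\]

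Next I would invoke~\eqref{relation} to kill the cross term, since $\langle\nabla M_1,\nabla M_2\rangle = 0$ at every point $x\ne M^{-1}(\infty)$, and then invoke~\eqref{J} to replace $|\nabla M_1|^2$ and $|\nabla M_2|^2$ by the common value $\J(x)$. Combining these simplifications yields
\[
	|\nabla v(x)|^2 = \big(|a|^2 + |b|^2\big)\,\J(x) = |(\nabla u)(M(x))|^2\,\J(x),
\]
which is exactly the claim.

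There is no real obstacle here: the statement is essentially the conformality of $M$, combined with the fact that the Jacobian of a conformal map acts as scalar multiplication by $\sqrt{\J}$ on tangent vectors. The only thing to watch out for is the complex-valuedness of $u$, which is handled automatically by writing the cross term as $2\Re(a\bar b)$ rather than $2ab$; once that is in place, the identities~\eqref{J} and~\eqref{relation} do all the work.
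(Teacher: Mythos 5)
Your proposal is correct and follows essentially the same route as the paper's own proof: apply the chain rule to $v=u\circ M$, expand $|\nabla v|^2$, kill the cross term via the orthogonality relation~\eqref{relation}, and identify $|\nabla M_1|^2=|\nabla M_2|^2=\J$ via~\eqref{J}. Your writing of the cross term as $2\Re(a\bar b)\langle\nabla M_1,\nabla M_2\rangle$ is in fact slightly more careful than the paper's, which omits the conjugate, but the term vanishes either way.
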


\begin{proof}
	Using relations~\eqref{J},~\eqref{relation},
	and the chain rule for differentiation we obtain
	\[
	\begin{split}
		|\nabla v|^2 & = 
		\big|(u_x^\pp\circ M)\p_x M_1+(u_y^\pp\circ M)\p_x M_2\big|^2
		+
		\big|(u_x^\pp\circ M)\p_y M_1 + (u_y^\pp\circ M)\p_y M_2\big|^2
		\\[0.5ex]
		& = 
		\Big(|u_x^\pp\circ M|^2 + |u_y^\pp\circ M|^2\Big)\J 
		+	2\Re \Big[ \big((u^\pp_x u^\pp_y)\circ M\big)\cdot
		\big\langle \nabla M_1, \nabla M_2 \big\rangle \Big]\\[0.5ex]
		&= 
		\Big(|u_x^\pp\circ M|^2 + |u_y^\pp\circ M|^2\Big)\J
		= 
		|(\nabla u)\circ M|^2 \J.
	\end{split}
	\]
	The claim is thus shown.
\end{proof}

\subsection{Point $\delta'$-interaction on a loop}
\label{ssec:modelproblem}

In this subsection we introduce an auxiliary self-adjoint 
Schr\"o\-dinger operator $\OpId$ acting in the Hilbert space 
$( L^2(I), (\cdot,\cdot)_I )$ with $I := (0,d)$ and 
corresponding to a point $\delta^\pp$-interaction
on the one-dimensional loop of length $d > 0$. 
Employing the following shorthand notation
\[
	[\psi]_{\p I} := \psi(d-) - \psi(0+),\qquad\psi \in H^2(I),
\]
we define
\begin{equation}\label{def:opId}
	\OpId\psi := -\psi^\dpp,
	\quad 
	\dom \OpId := \big\{\psi\in H^2(I)\colon 
		\psi^\pp(0+) = \psi^\pp(d-) = \omega [\psi]_{\p I}\big\},
\end{equation}
where $\omega\in\dR$; see \cite{AGHH05, BN13, EKMT14, ENZ01, GM09, KM10} 
for the investigations of more general
operators of this type. Note that $\omega = 0$  
corresponds to Neumann boundary conditions at the endpoints.
Next proposition states a spectral property of $\OpId$, which is
useful for our purposes.

\begin{prop}\label{prop:loop} 
	The self-adjoint operator $\OpId$ in the Hilbert space $L^2(I)$, 
	defined in~\eqref{def:opId}, is non-negative if $d\omega \le 1$.
\end{prop}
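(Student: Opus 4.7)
The plan is to establish non-negativity by switching to the associated quadratic form and applying an elementary one-dimensional Poincar\'e-type estimate.

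First, I would identify the sesquilinear form associated with $\OpId$. For $\psi\in\dom\OpId\subset H^2(I)$, integration by parts yields
\[
	(\OpId\psi,\psi)_I
	= \int_0^d |\psi^\pp|^2 \,\rmd x
	  - \psi^\pp(d-)\overline{\psi(d-)} + \psi^\pp(0+)\overline{\psi(0+)}.
\]
Substituting the boundary condition $\psi^\pp(0+)=\psi^\pp(d-)=\omega[\psi]_{\p I}$ one obtains
\[
	(\OpId\psi,\psi)_I = \int_0^d |\psi^\pp|^2 \,\rmd x - \omega\,\big|[\psi]_{\p I}\big|^2,
\]
so it suffices to bound the right-hand side from below by $0$ on $\dom\OpId$.

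Next I would invoke the Cauchy--Schwarz inequality in its integral form: for $\psi\in H^2(I)\subset H^1(I)$,
\[
	\big|[\psi]_{\p I}\big|^2
	= \Big|\int_0^d \psi^\pp(x)\,\rmd x\Big|^2
	\le d \int_0^d |\psi^\pp(x)|^2\,\rmd x.
\]
This is the key one-dimensional ingredient: the jump across the endpoints is controlled by the length $d$ times the Dirichlet-type term. Combining both displays under the assumption $d\omega\le 1$ (the claim is trivial for $\omega\le 0$; for $\omega>0$ one uses the inequality as stated),
\[
	(\OpId\psi,\psi)_I
	\ge (1 - d\omega)\int_0^d |\psi^\pp|^2 \,\rmd x \ge 0,
\]
valid for every $\psi\in\dom\OpId$. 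Since $\OpId$ is self-adjoint, its non-negativity on the domain is equivalent to non-negativity as an operator, completing the argument.

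There is essentially no serious obstacle here; the only point requiring a moment of care is the correct bookkeeping of signs in the integration by parts and a careful use of the boundary conditions to collapse the two endpoint contributions into the single quadratic expression $\omega|[\psi]_{\p I}|^2$. The inequality $d\omega\le 1$ is then seen to be exactly the threshold at which the Cauchy--Schwarz bound compensates the negative jump term, which also transparently suggests that it is sharp.
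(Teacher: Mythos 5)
Your argument is correct, but it takes a genuinely different route from the paper. The paper proves this proposition by brute force: it writes a candidate eigenfunction for a negative eigenvalue $\lambda=-\kappa^2$ as $A\exp(\kappa x)+B\exp(-\kappa x)$, imposes the boundary conditions, reduces the existence of a negative eigenvalue to the transcendental equation $\Theta_\omega(\kappa)=1$ with $\Theta_\omega(\kappa)=\frac{2\omega}{\kappa}\frac{1-\exp(-\kappa d)}{1+\exp(-\kappa d)}$, and then shows via the mean value theorem that $\Theta_\omega(\kappa)<d\omega\le 1$ for all $\kappa>0$. You instead pass to the quadratic form $(\OpId\psi,\psi)_I=\|\psi^\pp\|^2_I-\omega|[\psi]_{\p I}|^2$ (your integration by parts and sign bookkeeping are right, and this is exactly the form $\frmId$ in~\eqref{pointform}) and control the jump by the Cauchy--Schwarz inequality $|[\psi]_{\p I}|^2=|\int_0^d\psi^\pp|^2\le d\,\|\psi^\pp\|^2_I$. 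Your approach is shorter and more transparent about why $d\omega=1$ is the threshold (equality in Cauchy--Schwarz forces $\psi^\pp$ constant), and it has the additional advantage that the estimate holds verbatim for every $\psi\in H^1(I)$, so it delivers Corollary~\ref{cor:nonnegative} directly without invoking the form representation of $\OpId$; the paper instead derives that corollary from the proposition by citing the representation result of~\cite{KM14}. What the paper's computation buys in exchange is the explicit secular equation $\Theta_\omega(\kappa)=1$, which characterises the negative eigenvalues for all $\omega$, not merely their absence for $d\omega\le 1$. One cosmetic caveat: your closing remark that the bound is ``sharp'' is only a heuristic as stated --- to prove sharpness you would still need to exhibit a trial function with negative form value when $d\omega>1$ --- but since the proposition does not claim sharpness this does not affect the validity of your proof.
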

\begin{proof}
	We prove this proposition via construction of an
	explicit condition for
	the negative spectrum of $\OpId$ and its analysis.
	Obviously, the spectrum of $\OpId$ is discrete 
	(due to the compact embedding of $H^2(I)$ into $L^2(I)$). 
	An eigenfunction of $\OpId$, which corresponds to a negative
	eigenvalue $\lm = -\kappa^2 < 0$ ($\kappa > 0$) is characterized by the
	following two conditions:
	\begin{subequations}
	\begin{align}
		\label{c1} -\psi^\dpp(x)& =-\kappa^2\psi(x);\\
		\label{c2} \psi^\pp(0 +) & = \psi^\pp(d-) = \omega[\psi]_{\p I}.
	\end{align}
	\end{subequations}
	The condition~\eqref{c1} is satisfied by a function, 
	which can be represented in the form
	\begin{equation*}
		\psi(x) = A \exp(\kappa x)+B \exp(-\kappa x),\qquad x\in(0,d),
	\end{equation*}
	with some constants $A, B\in\dC$. Simple computations yield
	\begin{align*}
		&\psi(0+) = A + B,
		&\psi(d-) = A\exp(\kappa d)  + B\exp(-\kappa d),\\
		&\psi^\pp(0+) = \kappa A - \kappa B,
		&\psi^\pp(d-) = \kappa A\exp(\kappa d)  - \kappa B\exp(-\kappa d).
	\end{align*}
	The above identities and the condition~\eqref{c2} together imply
	\begin{subequations}	
	\begin{align}
		\label{AB} 
		& A = \frac{1 -\exp(-\kappa d)}{1- \exp(\kappa d)}B;\\[1.2ex]
		\label{AB2} 
		&\kappa A -\kappa B = \omega\Big(A\big(\exp(\kappa d)-1\big) 
		 			+	B\big(\exp(-\kappa d) - 1\big)\Big).
	\end{align}
	\end{subequations}
	Substituting the formula~\eqref{AB} into~\eqref{AB2}, we arrive at
	\begin{equation*}
		\kappa B\Bigg(\frac{1 -\exp(-\kappa d)}{1- \exp(\kappa d)} -1\Bigg)
		= 
		\omega\Big(-B(1-\exp(-\kappa d)) + B(\exp(-\kappa d) - 1)\Big),
	\end{equation*}
	that is equivalent to
	\begin{equation*}
		\exp(-\kappa d) - \exp(\kappa d) = 
		\frac{2\omega}{\kappa}\Big(1-\exp(-\kappa d)\Big)
		\Big(1-\exp(\kappa d)\Big).
	\end{equation*}
	Making several steps further in the computations, we obtain
	\begin{equation*}
	\begin{split}
		1 & = \frac{2\omega}{\kappa}
		\frac{(1 -\exp(-\kappa d))(1-\exp(\kappa d))}
				{\exp(\kappa d)(\exp(-2\kappa d)-1)} 
			= \frac{2\omega}{\kappa}
			\frac{\exp(\kappa d)-1}{\exp(\kappa d)(\exp(-\kappa d)+1)}\\
		 & = \frac{2\omega}{\kappa} \frac{1-\exp(-\kappa d)}
		 		{1 + \exp(-\kappa d)}.
	\end{split}
	\end{equation*}
	Define then the following function
	\[
		\Theta_\omega(\kappa) := 
		\frac{2\omega}{\kappa}\frac{1-\exp(-\kappa d)}{1 + \exp(-\kappa d)}, 			
		\qquad \kappa >0.
	\]
	Hence, the point $\lm = -\kappa^2$ is a negative eigenvalue
	of $\OpId$ if and only if $\Theta_\omega(\kappa) = 1$. 
	Let us consider the following auxiliary function
	\[
		f(x) := \frac{1- \ee^{-x}}{1+\ee^{-x}},\qquad x \ge 0,
	\]
	which is clearly continuously differentiable, 
	and whose derivative is given by
	\[
		f^\pp(x) = \frac{2}{(\ee^{x/2}+\ee^{-x/2})^2}, \qquad x \ge 0.
	\]  
	Hence, using the standard inequality $a + 1/a > 2$,
	$a \in (0,+\infty)$, $a \ne 1$, we get $f^\pp(x) < 1/2$ for all $x > 0$. 
	Applying the mean value theorem to $f$, we obtain
	\[
		f(x) = f(0) + f^\pp(\xi)(x - 0) = f^\pp(\xi)x < \frac{x}{2};
	\]
	here $\xi \in (0,x)$. Finally, note that
	\[
		0\le \Theta_\omega(\kappa) = \frac{2\omega}{\kappa}f(\kappa d) 
								< d\omega .
	\]
	Thus, for $d\omega \le 1$ the equation $\Theta_\omega(\kappa) = 1$
	has no positive roots and the claim follows.
\end{proof}

According to \eg \cite{KM14},
the operator $\OpId$ represents the sesquilinear form
\begin{equation}\label{pointform}
	\frmId [\psi,\phi] 
	:= 
	(\psi^\pp,\phi^\pp)_I- 
			\omega [\psi]_{\p I} [\ov\phi]_{\p I},
			\qquad
	\dom\frmId :=  H^1(I),
\end{equation}
and we can derive the following simple corollary 
of Proposition~\ref{prop:loop}.
\begin{cor}\label{cor:nonnegative} 
	Let the sesquilinear form $\frmId$ 
	be as in~\eqref{pointform}. If $d\omega \le 1$, then 
	$\frmId [ \psi ] \ge 0$ for all $\psi\in H^1(I)$.
\end{cor}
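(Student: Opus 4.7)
The plan is to derive the corollary directly from Proposition~\ref{prop:loop} via the variational (min-max) characterization of the bottom of the spectrum for a semibounded self-adjoint operator. The preamble to the corollary already supplies the key fact we need: the form $\frmId$ is closed, densely defined in $L^2(I)$, semibounded from below, and it represents the self-adjoint operator $\OpId$ (this is the content of the reference to \cite{KM14} and follows from straightforward computation of boundary terms via integration by parts, together with an application of Lemma~\ref{lem:BEL} in dimension one to verify closedness).

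Granted that representation, the first step is to invoke Proposition~\ref{prop:loop}, which under the hypothesis $d\omega \le 1$ yields $\OpId \ge 0$, i.e.\ $\s(\OpId) \subseteq [0,+\infty)$. The second step is to apply the standard min-max/Rayleigh formula for the infimum of the spectrum of a semibounded self-adjoint operator in terms of its associated closed form:
\[
\inf\s(\OpId) = \inf_{\substack{\psi \in \dom\frmId \\ \psi \ne 0}} \frac{\frmId[\psi]}{\|\psi\|_I^2}.
\]
Combining the two steps, the ratio on the right is non-negative for every $\psi \in H^1(I)\setminus\{0\}$, which is precisely the asserted inequality $\frmId[\psi] \ge 0$ on $\dom\frmId = H^1(I)$.

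There is essentially no obstacle here beyond bookkeeping: the only point that requires a moment's reflection is that the inequality proved at the operator level (\emph{a priori} only on $\dom\OpId \subset H^1(I)$) automatically transfers to the whole form domain $H^1(I)$. This is guaranteed by the first representation theorem, since $\dom\OpId$ is a core of the closed form $\frmId$, so $\frmId[\psi]$ for $\psi \in H^1(I)$ is the limit of $(\OpId\psi_n,\psi_n)_I \ge 0$ along any approximating sequence $\psi_n \in \dom\OpId$ with $\psi_n \to \psi$ in the form-norm. Thus the proof will reduce to two short sentences citing Proposition~\ref{prop:loop} and the min-max principle.
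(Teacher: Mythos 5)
Your argument is correct and is exactly the route the paper intends: the corollary is presented as an immediate consequence of Proposition~\ref{prop:loop} via the fact that $\OpId$ represents the closed form $\frmId$, so non-negativity of the operator transfers to non-negativity of the form on all of $H^1(I)$ by the first representation theorem and the variational characterization of $\inf\s(\OpId)$. The paper leaves this step implicit, and your two-sentence reduction fills it in correctly.
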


\begin{remark}
	Consider the non-negative symmetric operator 
	\[
		S\psi := -\psi^\dpp,\qquad \dom S := H^2_0(I),
	\]
	in $L^2(I)$. The operator $S$ is known to have deficiency indices $(2,2)$. 
	One may consider self-adjoint extensions of $S$ in $L^2(I)$.  
	The self-adjoint operator $\OpId$ with $d\omega = 1$
	turns to be the \emph{Krein-von Neumann extension} of $S$ 
	(the ``smallest'' non-negative self-adjoint extension of $S$); 
	\ie for any other non-negative self-adjoint extension $T$ of $S$
	\[
		(T+a)^{-1} \le (\OpId + a)^{-1}
	\]
	holds for all $a > 0$; see \eg 
	\cite[Cor. 10.13, Thm. 14.25, Ex. 14.14]{S}.
\end{remark}

\section{Definition of the operator and its essential spectrum}
\label{sec:def}

In this section we  rigorously define using form methods Schr\"odinger operators
with $\delta^\pp$-interactions supported on non-closed curves
as in Hypothesis~\ref{hyp:curve} 
and characterise their essential spectra. In the latter characterisation
the notion of a quasi-conical domain plays an essential role.

\subsection{Definition of the operator via its sesquilinear form}

Schr\"odinger operators with $\delta^\pp$-interactions 
supported on closed hypersurfaces were defined and investigated 
in~\cite{AGS87, BEL14_RMP, BLL13_AHP, EJ13, EJ14, EKh15}.
The goal of this subsection is to define 
rigorously Schr\"odinger operator with $\delta^\pp$-interactions
supported on a non-closed curve $\Lambda$ satisfying 
Hypothesis~\ref{hyp:curve}. In the case of a bounded $C^{2,1}$-smooth
curve $\Lambda$ our definition of the operator agrees 
with the one in the recent preprint~\cite{MPS15}, 
where this Hamiltonian
is defined using the theory of self-adjoint extensions of 
symmetric operators.

Let $\omega \in L^\infty(\Lambda;\dR)$ 
and denote by $\|\omega\|_\infty$ its sup-norm.
Recall the definition of the sesquilinear form $\frmIId$
in~\eqref{intro:form}
\begin{equation}\label{eq:form}
	\frmIId[u,v] := 
	(\nabla u,\nabla v)_{\dR^2} - (\omega [u]_\Lambda, [v]_\Lambda)_\Lambda,
	\qquad
	\dom\frmIId  := H^1(\RcoLambda).
\end{equation}
If $\omega \equiv 0$, we occasionally write
$\fra_{\rm N}^\Lambda$ instead of $\frmIId$. 

\begin{prop}\label{prop:form}
	Let $\Lambda\subset\dR^2$ be as in Hypothesis~\ref{hyp:curve}, 
	let $\omega \in L^\infty(\Lambda;\dR)$, 
	and let the linear space $F_\Lambda$ be as in~\eqref{FLambda}. 
	Then the sesquilinear form $\frmIId$ in~\eqref{eq:form} 
	is closed, densely defined, symmetric, and lower-semibounded 
	in the Hilbert space $L^2(\dR^2)$.
	Moreover, $F_\Lambda\subset\dom\frmIId$ is a core for this form.
\end{prop}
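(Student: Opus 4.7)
The plan is to verify the four claims in order, with closedness and lower-semiboundedness resting on the trace estimate from Proposition~\ref{prop:trace}(ii), and density/core following directly from the construction in~\eqref{FLambda}--\eqref{H1}.

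First, symmetry is immediate: the Dirichlet term is symmetric and the boundary term is real Hermitian because $\omega$ is real-valued. For density, I would observe that $F_\Lambda$ contains the orthogonal sum $\cD(\Omega_+)\oplus\cD(\Omega_-)$ (compactly-supported test functions inside $\Omega_\pm$ have vanishing traces on $\Sigma$, so the condition $[u]_\coLambda = 0$ is trivially met and they belong to $\cD(\ov{\Omega_+})\oplus\cD(\ov{\Omega_-})$). Since the latter is dense in $L^2(\Omega_+)\oplus L^2(\Omega_-)=L^2(\dR^2)$ and $F_\Lambda\subset\dom\frmIId$, density of $\dom\frmIId$ in $L^2(\dR^2)$ follows.

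For lower-semiboundedness, I would apply Proposition~\ref{prop:trace}(ii) with $\eps := 1/(2\|\omega\|_\infty)$ (assuming $\omega\not\equiv 0$; otherwise the claim is trivial). This yields a constant $C_*>0$ such that
\[
	|(\omega[u]_\Lambda,[u]_\Lambda)_\Lambda|
	\le \|\omega\|_\infty \|[u]_\Lambda\|^2_\Lambda
	\le \tfrac{1}{2}\|\nabla u\|^2_{\dR^2} + C_*\|u\|^2_{\dR^2}
\]
for all $u\in H^1(\RcoLambda)$. Hence
\[
	\frmIId[u] \ge \tfrac{1}{2}\|\nabla u\|^2_{\dR^2} - C_*\|u\|^2_{\dR^2},
\]
showing $\frmIId\ge -C_*$. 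In particular, the shifted form $\frmIId + (C_*+1)\|\cdot\|^2_{\dR^2}$ is positive and satisfies
\[
	\tfrac{1}{2}\|u\|^2_1 \le \frmIId[u] + (C_*+1)\|u\|^2_{\dR^2} \le C'\|u\|^2_1
\]
for some $C'>0$, the upper estimate coming from the trivial bound $\frmIId[u]\le \|\nabla u\|^2_{\dR^2}+\|\omega\|_\infty\|[u]_\Lambda\|^2_\Lambda$ combined again with Proposition~\ref{prop:trace}(ii).

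The last display says that the form norm is equivalent to $\|\cdot\|_1$ on $\dom\frmIId=H^1(\RcoLambda)$. Since $(H^1(\RcoLambda),\|\cdot\|_1)$ is a Hilbert space by construction~\eqref{H1}, this equivalence gives completeness in the form norm, i.e.\ $\frmIId$ is closed. Finally, $F_\Lambda$ is $\|\cdot\|_1$-dense in $H^1(\RcoLambda)$ by definition~\eqref{H1}, hence dense in the equivalent form norm, so $F_\Lambda$ is a core for $\frmIId$. The only step that requires any genuine input beyond bookkeeping is the lower-semiboundedness estimate, but even that reduces to an immediate application of the trace bound already established in Proposition~\ref{prop:trace}(ii); the rest is a routine verification.
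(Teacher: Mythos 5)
Your proposal is correct and follows essentially the same route as the paper: both arguments hinge on the $\eps$-trace estimate of Proposition~\ref{prop:trace}\,(ii) to control the boundary term by the gradient term with relative bound $<1$, and on the $\|\cdot\|_1$-density of $F_\Lambda$ in $H^1(\RcoLambda)$ from~\eqref{H1}. The only cosmetic difference is that the paper invokes the KLMN theorem for the perturbation $\fra'$ of the closed Neumann form $\fra_{\rm N}^\Lambda$, whereas you unwind that theorem by verifying directly that the shifted form norm is equivalent to $\|\cdot\|_1$.
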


\begin{proof}
	Since $\frmIId [u, u] \in\dR$ for all $u\in\dom\frmIId$, 
	the form $\frmIId$ is, clearly, symmetric. 
	It is straightforward to see the chain of inclusions 
	$\cD(\dR^2) \subset F_\Lambda\subset \dom\frmIId$. 
	Density of $\dom\frmIId$ in $L^2(\dR^2)$ 
	follows thus from the density of 
	$\cD(\dR^2)$ in $L^2(\dR^2)$; 
	for the latter see \eg \cite[Cor. 3.5]{McL}.

	The norm induced in the conventional way by the form 
	$\fra_{\rm N}^\Lambda$ on its domain $H^1(\RcoLambda)$
	is easily seen to be equivalent to the norm $\|\cdot\|_1$ introduced in
	Subsection~\ref{ssec:Sobolev}. Hence, the form $\fra_{\rm N}^\Lambda$ 
	is closed and the space $F_\Lambda$, being dense in 
	$H^1(\RcoLambda)$, is a core for it; 
	\cf \cite[Dfn. 10.2]{S}. Let us then introduce an auxiliary form 
	\[
		\fra'[u , v] := (\omega [u]_\Lambda, [v]_\Lambda)_\Lambda,
		\qquad
		\dom\fra' := H^1(\RcoLambda).
	\]
	Using Proposition~\ref{prop:trace}\,(ii) we get for all $\eps > 0$ 
	the following bound
	\[
		|\fra'[u,u]| \le 
		\eps\,\|\omega\|_\infty\fra_{\rm N}^\Lambda[u,u] 
		+ 
		C(\eps)\|\omega\|_\infty\|u\|^2_{\dR^2} 
	\]
	with some $C(\eps) > 0$. 
	Choosing $\eps < \frac{1}{\|\omega\|_\infty}$
	in the above bound, we obtain that $\fra'$ is relatively bounded
	with respect to $\fra_{\rm N}^\Lambda$ with form bound $< 1$. 
	Hence, by~\cite[Thm. 10.21]{S} \emph{(KLMN theorem)}
	the form $\frmIId = \fra_{\rm N}^\Lambda + \fra'$ 
	is closed and the space $F_\Lambda$ 
	is a core for it.
\end{proof}

\begin{dfn}\label{dfn:operator}
	The self-adjoint operator $\OpIId$ in $L^2(\dR^2)$ 
	corresponding to the form $\frmIId$
	via the first representation theorem 
	(see \eg \cite[Ch. VI, Thm. 2.1]{Kato})
	is called Schr\"odinger operator with $\delta^\pp$-interaction 
	of strength $\omega$ supported on $\Lambda$.
\end{dfn}

If $\omega$ is a non-negative function, then we occasionally say
that the respective $\delta^\pp$-interaction is \emph{attractive}. 

\subsection{Essential spectrum}\label{ssec:ess}

In this subsection we characterise 
the essential spectrum of the operator $\OpIId$. 
To this aim we require the following auxiliary lemma.

\begin{lem}\label{lem:inclusion} 
	Let the self-adjoint operator $\OpIId$ be as in
	Definition~\ref{dfn:operator}. 
	Then for any $u \in \cD(\RcoLambda)$ holds 
	\begin{equation}\label{u}
		u\in \dom\OpIId\quad\text{and} \quad 	\OpIId u = -\Delta u.
	\end{equation}
\end{lem}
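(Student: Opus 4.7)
The plan is to verify the two hypotheses of the first representation theorem: first, that $u\in\dom\frmIId = H^1(\RcoLambda)$, and second, that the sesquilinear form $\frmIId[u,\cdot]$ coincides with $(-\Delta u,\cdot)_{\dR^2}$ on a core.

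First I would argue that any $u\in\cD(\RcoLambda)$ actually belongs to the space $F_\Lambda$ from~\eqref{FLambda}. Indeed, such a $u$ is smooth on $\dR^2$, has compact support in $\RcoLambda$, and therefore vanishes in an open neighbourhood of $\Lambda$. Consequently the restrictions $u_\pm := u|_{\Omega_\pm}$ lie in $\cD(\ov{\Omega_\pm})$, and continuity of $u$ across $\Sigma\setminus\Lambda$ gives $[u]_\coLambda = 0$. This yields $u\in F_\Lambda\subset H^1(\RcoLambda)$. Two further consequences will be used below: $[u]_\Lambda = 0$ (since $u$ vanishes near $\Lambda$), and $-\Delta u \in \cD(\RcoLambda)\subset L^2(\dR^2)$.

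Next I would take an arbitrary $v\in F_\Lambda$ and evaluate $\frmIId[u,v]$. Since $[u]_\Lambda = 0$, the interaction term disappears and only $(\nabla u,\nabla v)_{\dR^2}$ remains. I then apply Green's first identity on $\Omega_\pm$ separately (justified because $u_\pm\in\cD(\ov{\Omega_\pm})$ and $v_\pm\in H^1(\Omega_\pm)$) and add the two contributions, using that the outward normals to $\Omega_\pm$ on $\Sigma$ are opposite:
\[
   (\nabla u,\nabla v)_{\dR^2}
   = (-\Delta u,v)_{\dR^2}
     + \big(\p_{\nu_+} u|_+,\,v_+\big)_{\Sigma}
     - \big(\p_{\nu_+} u|_-,\,v_-\big)_{\Sigma}.
\]
Splitting the boundary integrals along $\Sigma = \Lambda \cup \coLambda$ the contributions vanish piecewise: on $\Lambda$ both traces of $\p_{\nu_+} u$ are zero because $u$ vanishes near $\Lambda$; on $\coLambda$ the normal derivatives $\p_{\nu_+} u|_\pm$ coincide by smoothness of $u$ across $\coLambda$, while $[v]_\coLambda = 0$ by definition of $F_\Lambda$, so the two surface integrals cancel. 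Hence $\frmIId[u,v] = (-\Delta u,v)_{\dR^2}$ for every $v\in F_\Lambda$.

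Since $F_\Lambda$ is a core for $\frmIId$ by Proposition~\ref{prop:form} and both sides of the identity are continuous in $v$ with respect to $\|\cdot\|_1$, the relation extends to all $v\in H^1(\RcoLambda) = \dom\frmIId$. The first representation theorem (\cite[Ch.~VI, Thm.~2.1]{Kato}) then yields $u\in\dom\OpIId$ with $\OpIId u = -\Delta u$, which is~\eqref{u}. The only delicate point in the argument is the bookkeeping of the boundary integrals: one must be sure that the endpoints of $\Lambda$ (where $\Lambda$ meets $\coLambda$) do not generate spurious contributions, but these are negligible since they have surface measure zero and the traces in question are square-integrable along $\Sigma$.
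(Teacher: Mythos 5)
Your proof is correct and follows essentially the same route as the paper: observe that $u\in F_\Lambda$ with $[u]_\Lambda=0$, apply Green's first identity on $\Omega_\pm$ separately, cancel the boundary terms on $\Lambda$ (since $u$ vanishes near $\Lambda$) and on $\coLambda$ (opposite normals plus $[v]_\coLambda=0$), and invoke the first representation theorem. The only cosmetic difference is that the paper tests directly against arbitrary $v\in\dom\frmIId$ using Proposition~\ref{prop:trace}\,(i), whereas you test against $v\in F_\Lambda$ and extend by continuity; both are valid.
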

\begin{proof}
	Let $\Sigma$ and $\Omega_\pm$ be as in Hypothesis~\ref{hyp:curve}.
	Let $u\in\cD(\RcoLambda)\subset F_\Lambda\subset \dom\frmIId$
	and $v\in \dom \frmIId$. Define $u_\pm := u\uhr\Omega_\pm$
	and $v_\pm := v\uhr\Omega_\pm$. With these notations in hands we get
	\[
		\frmIId [u,v] 
		=
		(\nabla u_+,\nabla v_+)_{\Omega_+} +
		(\nabla u_-,\nabla v_-)_{\Omega_-},
	\]
	where the boundary term in~\eqref{eq:form} 
	vanished due to the choice of $u$.
	Applying the first Green identity (see \eg \cite[Lem. 4.1]{McL}
	and also \cite[Sec. 2]{BEL14_RMP}) to the above formula, we get
	\[
	\begin{split}
		\frmIId[u,v] & = (-\Delta u_+, v_+)_{\Omega_+}
					+
					(-\Delta u_-, v_-)_{\Omega_-}\\
					&\qquad\qquad+
					\big(\p_{\nu_+} u_+|_\coLambda +  \p_{\nu_-} u_-|_\coLambda,
							 v|_\coLambda\big)_\coLambda\\
					&\qquad\qquad\qquad\qquad 
					+ 
					\big(\p_{\nu_+} u_+|_\Lambda, v_+|_\Lambda\big)_\Lambda
					+
					\big(\p_{\nu_-} u_-|_\Lambda, v_-|_\Lambda\big)_\Lambda
				= (-\Delta u,v)_{\dR^2},
	\end{split}
	\]
	where we employed that 
	$\p_{\nu_\pm}u_\pm|_\Lambda = 0$, that 
	$\p_{\nu_+} u_+|_\coLambda + \p_{\nu_-}u_-|_\coLambda = 0$,
	and that $[v]_\coLambda = 0$; 
	for the latter \cf Proposition~\ref{prop:trace} (i).
	Finally, the first representation theorem 
	 yields~\eqref{u}.
\end{proof}

Next, we define the notion of the quasi-conical domain;
see \cite{Gl} and also \cite[Def. X.6.1]{EE}.

\begin{dfn}\label{def:quasi_conical}
	A domain $\Omega\subset\dR^2$ is called \emph{quasi-conical}
	if for any $n\in\dN$ there exists $x_n\in\dR^2$
	such that $D_n(x_n) \subset\Omega$. Recall
	that here $D_n(x_n)$ is the disc of radius $n$
	with the center $x_n$.
\end{dfn}

Using this notion, we prove that positive semi-axis
lies inside the spectrum of $\OpIId$ if the domain
$\RcoLambda$ is quasi-conical. The technique of this
proof is rather standard.

\begin{prop} 
\label{prop:ess1}
	Let the curve $\Lambda\subset\dR^2$ as in Hypotheses~\ref{hyp:curve}
	be such that the domain $\RcoLambda$ is quasi-conical.
	Then the spectrum of the self-adjoint operator 
	$\OpIId$ in Definition~\ref{dfn:operator} satisfies 
	\begin{equation}\label{sess:incl}
		\s(\OpIId)\supseteq [0,+\infty).
	\end{equation}
\end{prop}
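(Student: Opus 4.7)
The plan is to verify the inclusion via Weyl's criterion by constructing, for every $\lambda \ge 0$, a normalized sequence $(u_n) \subset \dom \OpIId$ such that $\|(\OpIId-\lambda) u_n\|_{\dR^2} \to 0$. The natural candidates are rescaled, localized plane waves: pick $\xi \in \dR^2$ with $|\xi|^2 = \lambda$, so that $e^{\ii \xi \cdot x}$ formally satisfies $-\Delta e^{\ii \xi \cdot x} = \lambda e^{\ii \xi \cdot x}$, and cut it off with a smooth bump supported far away from $\Lambda$.

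Concretely, I would fix $\chi \in \cD(\dR^2)$ with $\supp\chi \subset D_1$ and $\|\chi\|_{\dR^2} \ne 0$, and invoke the quasi-conicity of $\RcoLambda$ to pick $x_n \in \dR^2$ with $D_n(x_n) \subset \RcoLambda$ for every $n\in\dN$. Then set
\[
    \chi_n(x) := \chi\!\left(\frac{x-x_n}{n}\right),
    \qquad
    u_n(x) := \frac{1}{n\,\|\chi\|_{\dR^2}}\,\chi_n(x)\,e^{\ii \xi\cdot x}.
\]
By construction $\supp u_n \subset \ov{D_n(x_n)} \subset \RcoLambda$, so $u_n \in \cD(\RcoLambda)$, and Lemma~\ref{lem:inclusion} gives $u_n \in \dom \OpIId$ with $\OpIId u_n = -\Delta u_n$. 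A standard change of variables shows $\|u_n\|_{\dR^2}=1$.

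The verification of the Weyl condition is then a direct scaling computation. Using the product rule,
\[
    (\OpIId - \lambda) u_n
    = \frac{e^{\ii \xi\cdot x}}{n\,\|\chi\|_{\dR^2}}
        \big(-\Delta \chi_n - 2\ii\,\xi\cdot\nabla \chi_n\big),
\]
and the scaling identities $\|\nabla \chi_n\|_{\dR^2}^2 = \|\nabla \chi\|_{\dR^2}^2$ and $\|\Delta \chi_n\|_{\dR^2}^2 = n^{-2}\|\Delta \chi\|_{\dR^2}^2$ lead to an upper bound of order $O(n^{-2})$ for $\|(\OpIId-\lambda)u_n\|_{\dR^2}^2$. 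By Weyl's criterion this yields $\lambda \in \s(\OpIId)$, and since $\lambda \ge 0$ was arbitrary, \eqref{sess:incl} follows.

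There is no serious obstacle here: the argument is the textbook construction of approximate eigenfunctions for $-\Delta$ on $\dR^2$, and the only role of $\Lambda$ is to be avoided by the supports of the $u_n$, which is precisely what quasi-conicity guarantees. The key structural input is Lemma~\ref{lem:inclusion}, which ensures that on test functions supported away from $\Lambda$ the operator $\OpIId$ acts as the free Laplacian, so no boundary contributions appear in the error term.
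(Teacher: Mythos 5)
Your construction is exactly the paper's: a rescaled bump times a plane wave $e^{\ii\xi\cdot x}$ with $|\xi|^2=\lambda$, translated into the large discs $D_n(x_n)\subset\RcoLambda$ furnished by quasi-conicity, with Lemma~\ref{lem:inclusion} identifying $\OpIId$ with $-\Delta$ on such functions and Weyl's criterion finishing the argument. The only difference is a cosmetic one in where the normalizing factor $n^{-1}$ is placed, so the proposal is correct and matches the paper's proof.
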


\begin{proof}
	First, for any $k\in\dR^2$, define the sequence
	\[
		u_n(x) := v_n(x)\ee^{\ii k\cdot x},	\qquad n\in\dN,
	\]
	where $v_n(x) := n^{-1}v(n^{-1}x)$, $n\in\dN$,
	and $v$ is a non-trivial function in $\cD(\dR^2)$
	with $\supp v \subset D_1$ and such that $\|v\|_{\dR^2} = 1$.
	The prefactor in the definition of $v_n$ is chosen
	in such a way that also each $v_n$ satisfies $\|v_n\|_{\dR^2} = 1$.
	In fact, we have (by direct computations)
	\begin{equation}\label{eq:wt_vn}
		\|v_n\|_{\dR^2} = 1, 
		\qquad
		\|\nabla  v_n\|_{\dR^2} = \frac{\|\nabla v\|_{\dR^2}}{n},
		\qquad
		\|\Delta v_n\|_{\dR^2} = \frac{\|\Delta v\|_{\dR^2}}{n^2}.
	\end{equation}

	Secondly, we set
	\[
		w_n(x) := u_n(x - x_n), \qquad n\in\dN,
	\]
	with $x_n$ corresponding to the quasi-conical domain $\RcoLambda$
	according to Definition~\ref{def:quasi_conical}.
	Hence, we get
	\[
		\supp w_n \subset  D_n(x_n) \subset \RcoLambda,
	\]
	and therefore $w_n \in \cD(\RcoLambda)$	for all $n\in\dN$.
	It is clear in view of Lemma~\ref{lem:inclusion}
	that each $w_n$ belongs to 
	$\dom \OpIId \supset \cD(\RcoLambda)$.
	
	A direct computation yields
	\[
		|-\Delta w_n - |k|^2 w_n|^2 
				\le 2|\Delta v_n|^2 + 4|k\cdot\nabla v_n|^2
				\le 2|\Delta v_n|^2 + 4|k|^2\cdot|\nabla v_n|^2.
	\]
	Using \eqref{eq:wt_vn} and Lemma~\ref{lem:inclusion}, we therefore have
	\[
	\begin{split}
		\|\OpIId w_n - |k|^2 w_n\big\|^2_{\dR^2}
		 =
		\| -\Delta w_n - |k|^2 w_n\|^2_{\dR^2}
		 \le 
		2\|\Delta v_n\|^2_{\dR^2} + 4|k|^2 \|\nabla v_n\|^2_{\dR^2}
		\arr 0,\!\qquad n\arr \infty.
	\end{split}	
	\]
	Since the choice of $k\in\dR^2$ was arbitrary,
	we conclude applying Weyl's criterion (see \cite[Sec. 7.4]{W} 
	and also \cite[Thm. 4]{KLu14})	that $[0,+\infty)\subseteq\s(\OpIId)$.
\end{proof}

We emphasize that not for every non-closed curve $\Lambda\subset\dR^2$ the domain $\RcoLambda$
is quasi-conical; \eg  for the \emph{Archimedean spiral},  defined in polar
coordinates $(r,\phi)$ by the equation $r(\varphi) := a + b\varphi$, 
$\varphi\in\dR_+$, $a, b > 0$, the domain $\RcoLambda$ is not of
this type.

In the case of bounded curves we show that 
the essential spectrum of $\OpIId$ coincides 
with the set $[0,+\infty)$.

\begin{prop}\label{prop:ess2}
	Let the bounded curve $\Lambda\subset\dR^2$ be as in 
	Hypothesis~\ref{hyp:curve} and let the self-adjoint operator 
	$\OpIId$ be as in Definition~\ref{dfn:operator}. 
	Then its essential spectrum is characterised as 
	\[
		\sess(\OpIId) = [0,+\infty).
	\]
\end{prop}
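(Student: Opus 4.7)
The plan is to establish the two inclusions $\sess(\OpIId) \supseteq [0,+\infty)$ and $\sess(\OpIId) \subseteq [0,+\infty)$ separately.

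For the first inclusion I would exploit that $\Lambda$ is bounded: if $\Lambda \subset D_{R_0}$, one can select centers $x_n \in \dR^2$ with $|x_n| \arr \infty$ such that the discs $D_n(x_n)$ lie in $\RcoLambda$ and are pairwise disjoint. In particular $\RcoLambda$ is quasi-conical, so Proposition~\ref{prop:ess1} gives $\s(\OpIId) \supseteq [0,+\infty)$. Moreover, the sequence $(w_n)$ built in the proof of that proposition has $\supp w_n \subset D_n(x_n)$, so the $w_n$ are orthogonal in $L^2(\dR^2)$, hence $w_n \rightharpoonup 0$; combined with $\|w_n\|_{\dR^2} = 1$ and $\|(\OpIId - |k|^2) w_n\|_{\dR^2} \arr 0$, this makes $(w_n)$ a singular sequence for every $|k|^2 \ge 0$, upgrading the conclusion to $\sess(\OpIId) \supseteq [0,+\infty)$.

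For the reverse inclusion I would argue by contradiction. Suppose $\lambda < 0$ belongs to $\sess(\OpIId)$; by Weyl's criterion there exists $(u_n) \subset \dom \OpIId$ with $\|u_n\|_{\dR^2} = 1$, $u_n \rightharpoonup 0$ in $L^2(\dR^2)$, and $(\OpIId - \lambda) u_n \arr 0$ in $L^2(\dR^2)$. Pairing with $u_n$ gives
\[
	\|\nabla u_n\|^2_{\dR^2} - (\omega [u_n]_\Lambda, [u_n]_\Lambda)_\Lambda = \frmIId[u_n, u_n] \arr \lambda.
\]
Applying Proposition~\ref{prop:trace}(ii) with $\eps < 1/\|\omega\|_\infty$ bounds $\|\nabla u_n\|_{\dR^2}$ uniformly, so $(u_n)$ is bounded in $H^1(\RcoLambda)$.

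The crux is then to verify that $[u_n]_\Lambda \arr 0$ strongly in $L^2(\Lambda)$. I would choose $R > 0$ large enough that $\Lambda$ is compactly contained in $D_R$; then $\Omega_\pm \cap D_R$ are bounded Lipschitz domains whose boundaries contain $\Lambda$. The restrictions $u_n|_{\Omega_\pm \cap D_R}$ form bounded sequences in $H^1(\Omega_\pm \cap D_R)$ tending weakly to zero in $L^2$, so compactness of the trace from $H^1$ into $L^2$ of the boundary of a bounded Lipschitz domain yields $u_n|_\pm \arr 0$ strongly in $L^2(\Lambda)$, hence $[u_n]_\Lambda \arr 0$. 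Substituting back into the displayed limit produces $\|\nabla u_n\|^2_{\dR^2} \arr \lambda < 0$, which is impossible. The main potential obstacle is precisely this trace compactness: because $\Lambda$ has free endpoints sitting in the interior of $D_R$, the slit region $D_R \setminus \Lambda$ is not a standard Lipschitz domain, so one cannot apply Rellich-Kondrachov directly to it; the workaround is to split into the two genuinely Lipschitz halves $\Omega_\pm \cap D_R$, where $\Lambda$ appears as a regular piece of the boundary, and invoke standard compactness there.
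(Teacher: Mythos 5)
Your proof is correct in substance, but for the inclusion $\sess(\OpIId)\subseteq[0,+\infty)$ it takes a genuinely different route from the paper. The paper does not use singular sequences at all for this half: it introduces the form $\fra_c^\Sigma$ of a $\delta'$-interaction of constant strength $c=\|\omega\|_\infty$ on the \emph{closed} curve $\Sigma\supset\Lambda$, with the larger domain $H^1(\Omega_+)\oplus H^1(\Omega_-)$, observes the ordering $\fra_c^\Sigma\prec\frmIId$ (larger form domain, smaller form values), and imports $\sess(\sfH_c^\Sigma)=[0,+\infty)$ from \cite{BEL14_RMP} to conclude $0=\inf\sess(\sfH_c^\Sigma)\le\inf\sess(\OpIId)$ via the min-max comparison of \cite[\S 10.2, Thm.~4]{BS87}; this is a two-line argument, but it leans on an external result for closed curves. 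Your contradiction argument is self-contained: the boundedness of $\|\nabla u_n\|_{\dR^2}$ follows from Proposition~\ref{prop:trace}\,(ii) exactly as you say, and the strong convergence $[u_n]_\Lambda\to0$ can be obtained cleanly by combining Lemma~\ref{lem:BEL} (letting $\eps\to0$) with the Rellich compactness $u_n\to0$ in $L^2(\Omega_\pm\cap D_R)$ --- this is precisely the trace compactness you invoke, and your decision to work on the two Lipschitz halves rather than on the slit disc is the right one. The only point worth patching is that the intersection of the Lipschitz domain $\Omega_\pm$ with $D_R$ need not itself be Lipschitz if $\partial D_R$ meets $\Sigma$ tangentially; choose $R$ so that the intersection is transversal, or replace $D_R$ by any bounded Lipschitz set whose boundary contains $\Lambda$. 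For the reverse inclusion both you and the paper use quasi-conicality of $\RcoLambda$ via Proposition~\ref{prop:ess1}; your explicit upgrade of the Weyl sequence to a singular sequence through pairwise disjoint supports is a valid way to land in the essential spectrum rather than merely in the spectrum (a step the paper leaves implicit).
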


\begin{proof}
	Let the curve $\Sigma\subset\dR^2$ and the domains
	$\Omega_\pm\subset\dR^2$ be as in Hypothesis~\ref{hyp:curve}, 
	in particular, $\Lambda\subset\Sigma$. 
	Let us also set  $c := \|\omega\|_\infty$.
	Consider the sesquilinear form
	\begin{equation}\label{frac}
	\begin{split}
		\fra^\Sigma_c[u,v] & := 
		(\nabla u,\nabla v)_{\dR^2} -c ( [u]_\Sigma, [v]_\Sigma )_\Sigma,\\
		\dom\fra^\Sigma_c & := H^1(\Omega_+)\oplus H^1(\Omega_-).
	\end{split}	
	\end{equation}
	According to~\cite[Prop. 3.1]{BEL14_RMP} the form
	$\fra^\Sigma_c$ is closed, densely defined, symmetric, 
	and lower-semibounded in $L^2(\dR^2)$. 
	The self-adjoint operator $\sfH_c^\Sigma$
	in $L^2(\dR^2)$ representing the form $\fra^\Sigma_c$,
	satisfies 
	\begin{equation}\label{ess1}
		\sess(\sfH^\Sigma_c) = [0,+\infty);
	\end{equation}
	see~\cite[Thm. 4.2]{BEL14_RMP} and also~\cite[Thm. 3.16]{BLL13_AHP}.
	The sesquilinear forms $\frmIId$ and $\fra_c^\Sigma$ 
	in~\eqref{eq:form} and~\eqref{frac}, respectively, 
	naturally satisfy the ordering
	\[
		\fra_c^\Sigma\prec\frmIId
	\]
	in the sense of~\cite[\S VI.2.5]{Kato},
	see also \cite[\S 10.2.3]{BS87}. Indeed,
	firstly, $\dom\frmIId \subset \dom \fra_c^\Sigma$
	and, secondly, for any $u \in \dom\frmIId$ 
	the inequality $\fra_c^\Sigma[u,u]\le\frmIId[u,u]$ 
	holds due to the choice of the constant $c \ge 0$.
	Hence, using~\eqref{ess1} and~\cite[\S 10.2, Thm. 4]{BS87}
	we arrive at
	\[
		0 = \inf\sess(\sfH_c^\Sigma) \le\inf\sess(\OpIId).
	\]
	Therefore, we end up with the following inclusion
	\begin{equation}\label{ess2}
		\sess(\OpIId) \subseteq [0,+\infty).
	\end{equation}
	Moreover, for simple geometric reasons for any bounded curve $\Lambda$
	the domain $\RcoLambda$ is quasi-conical
	and hence by Proposition~\ref{prop:ess1} the opposite inclusion
	\begin{equation}\label{ess3}
		\sess(\OpIId) \supseteq [0,+\infty)
	\end{equation}
	holds as well. 
	The claim then follows from these two inclusions 
	(\eqref{ess2} and \eqref{ess3}).
\end{proof}

\section{Non-negativity of $\OpIId$}
\label{sec:abs}

This section plays the central role in the present paper. 
We obtain various sufficient conditions for the operator 
$\OpIId$ to be non-negative. Under additional assumptions we 
also show that positive spectrum of $\OpIId$ comprises
the whole positive real axis and thus the operator 
$\OpIId$ has no bound states.
In the proofs we use the min-max principle
for self-adjoint operators, a reduction to the one-dimensional model problem
discussed in Subsection~\ref{ssec:modelproblem},  
and some insights from geometry and complex analysis. 

\subsection{An auxiliary lemma}

In this subsection we prove a lemma, based on which we show non-negativity
of the operators $\OpIId$ under certain assumptions on $\omega$. 
For the formulation of this lemma we require the following hypothesis,
the assumptions of which  are grouped
in three logical blocks labelled by capital latin letters.

\begin{hyp}\label{hyp:disc}
	{\rm 
	{\bf (A)}
	Let a monotone piecewise-$C^1$ curve $\Lambda\subset\dR^2$ be parametrized
	via the mapping $\phi\colon (0,R) \arr \dR$, $R \in (0,+\infty]$, 
	as in Definition~\ref{def:monotone} with $\xx_0 = 0$.

	{\bf (B)}
	Suppose that piecewise-$C^1$ domains 
	$G_\pm \subset D_R$ satisfy the following conditions:
	\[
		G_+\cap G_- = \varnothing, \quad \ov{D_R} = \ov{G_+\cup G_-},
		\quad \text{and}\quad 
		\Lambda \subset\ov{G_+}\cap\ov{G_-}.
	\]	
	Set $\Sigma := \ov{G_+}\cap\ov{G_-}$.
	In particular, the inclusion $\Lambda\subset \Sigma$ holds. 
	
	{\bf (C)}
	Let the function $\omega \in L^\infty(\Lambda;\dR)$ 
	as a function of the distance $r$ from the origin 
	satisfy 
	\begin{equation}\label{omega}
		\omega(r) \le 
		\frac{1}{2\pi r\sqrt{1+(r\varphi^\pp(r))^2}}, 
		\qquad\text{for}\quad r\in(0,R).
	\end{equation}
	}
\end{hyp}

We further deal with the space $H^1(G_+)\oplus H^1(G_-) \subset L^2(D_R)$.
Let us introduce also the following notations
\[
	[u]_\bullet := u_+|_\bullet - u_-|_\bullet,
	\quad
	\bullet = \Lambda~\text{or}~\coLambda,
	\quad
	u = u_+\oplus u_-\in H^1(G_+)\oplus H^1(G_-).
\]	
Clearly, one can define polar coordinates $(r,\varphi)$ on $D_R$, which 
are connected with the usual Cartesian coordinates via
standard relations $x = r\cos\varphi$ and $y = r\sin\varphi$. 
The points $(r,\varphi + 2\pi k)$ with $k\in\dZ$ are identified
with each other. The disc $D_R$ in the polar coordinate system
is given by $D_R = \big\{(r,\varphi)\colon r\in [0,R),~ 
\varphi \in [0,2\pi)\big\}$.
	
For the substantial simplification of further
computations we make use of the following
shorthand notation:
\begin{equation}\label{frf}
	\frfR[ u ] := \| \nabla u \|^2_{D_R} - 
					 (\omega [u]_\Lambda, [u]_\Lambda)_\Lambda,
	\qquad
	u\in \cD(\ov{G_+})\oplus \cD(\ov{G_-}),
\end{equation}
where all the objects are as in Hypothesis~\ref{hyp:disc}.
Now we formulate and prove the following lemma.

\begin{lem}\label{lem:estimate}
	Assume that Hypothesis~\ref{hyp:disc} holds. 
	Then $\frfR [ u ] \ge  0$ for all 
	$u\in \cD(\ov{G_+})\oplus\cD(\ov{G_-})$ such that
	$[u]_\coLambda = 0$.
\end{lem}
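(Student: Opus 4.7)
The plan is to switch to polar coordinates centered at the origin $\xx_0 = 0$ (which is the natural ``center'' of the monotone parametrization of $\Lambda$), apply Fubini to decouple the radial and angular integrations, and then reduce the resulting angular problem for each fixed $r$ to the one-center $\delta^\pp$-interaction on a loop treated in Corollary~\ref{cor:nonnegative}. The condition~\eqref{omega} is designed precisely so that, after a change of variables on $\Lambda$, the effective one-dimensional coupling constant satisfies the sharp hypothesis $d\omega \le 1$ of that corollary.

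Concretely, I would first write
\[
	\|\nabla u\|^2_{D_R} = \int_0^R \!\!\int_0^{2\pi} |\p_r u|^2\, r\, d\varphi\, dr
	+ \int_0^R \!\!\int_0^{2\pi} |\p_\varphi u|^2\, r^{-1}\, d\varphi\, dr,
\]
which is valid for any $u \in \cD(\ov{G_+}) \oplus \cD(\ov{G_-})$ since such a $u$ is smooth on each side of $\Sigma$. Next, parametrizing $\Lambda$ by $r \in (0,R)$ via $r \mapsto (r\cos\phi(r), r\sin\phi(r))$ yields the arc-length element $ds = \sqrt{1+(r\phi^\pp(r))^2}\,dr$, so that
\[
	(\omega [u]_\Lambda, [u]_\Lambda)_\Lambda
	= \int_0^R \omega(r)\, \bigl|[u](r,\phi(r))\bigr|^2 \sqrt{1+(r\phi^\pp(r))^2}\, dr,
\]
where $[u](r,\phi(r))$ denotes the jump of the function $\varphi \mapsto u(r,\varphi)$ at $\varphi = \phi(r)$. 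The key observation is that the assumption $[u]_\coLambda = 0$ guarantees that for almost every $r \in (0,R)$ this jump at $\varphi = \phi(r)$ is the \emph{only} discontinuity of $u(r,\cdot)$ on the circle of radius $r$, since any additional crossings of that circle with $\Sigma$ happen inside $\coLambda$ where $u$ is continuous.

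Combining these formulas gives the fiberwise identity
\[
	\frfR[u] = \int_0^R \!\!\int_0^{2\pi}\! r |\p_r u|^2 d\varphi\, dr
	+ \int_0^R r^{-1} \!\Bigl( \!\int_0^{2\pi}\! |\p_\varphi u|^2 d\varphi - r\omega(r)\sqrt{1+(r\phi^\pp(r))^2}\,|[u](r,\phi(r))|^2\!\Bigr) dr.
\]
The first integrand is pointwise non-negative. For the inner expression, I regard $\psi(\varphi) := u(r,\varphi)$ as an element of $H^1(I)$ on the loop $I := (0,2\pi)$ identified with the circle of circumference $2\pi$, with the gluing point placed at $\phi(r)$ so that $[\psi]_{\p I}$ coincides with the jump of $u(r,\cdot)$ there. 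Applying Corollary~\ref{cor:nonnegative} with $d = 2\pi$ and the effective strength $\omega_0(r) := r\omega(r)\sqrt{1+(r\phi^\pp(r))^2}$, the inner expression is $\ge 0$ precisely when $2\pi\omega_0(r) \le 1$, which is exactly the hypothesis~\eqref{omega}.

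The main obstacle is purely bookkeeping: making sure that the identification between the angular slices and the model interval $(0,2\pi)$ is compatible (choice of endpoint of the loop, orientation of the jump $[u]_\Lambda$ versus $[\psi]_{\p I}$), and verifying that $\psi = u(r,\cdot)$ belongs to the form domain $H^1(I)$ of $\frmId$ for a.e.\ $r$; the latter follows from the piecewise smoothness of $u$ on $\ov{G_\pm}$ together with the absence of further jumps on $\coLambda$. No truly hard step arises, as condition~\eqref{omega} has been tailored so that the one-dimensional threshold from Proposition~\ref{prop:loop} becomes a pointwise condition along $\Lambda$.
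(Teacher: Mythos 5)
Your proposal is correct and follows essentially the same route as the paper: polar coordinates centered at $\xx_0=0$, discarding the radial derivative term, rewriting the line integral via the monotone parametrization, and reducing each angular fiber to the one-center $\delta^\pp$-loop of Corollary~\ref{cor:nonnegative} with $d=2\pi$. The only cosmetic difference is that you feed the $r$-dependent effective strength $r\omega(r)\sqrt{1+(r\phi^\pp(r))^2}$ directly into the corollary, whereas the paper first majorizes $\omega(r)\sqrt{1+(r\phi^\pp(r))^2}$ by $1/(2\pi r)$ and then applies the corollary with the fixed strength $1/(2\pi)$; these are equivalent.
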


\begin{proof}
	Let $u\in \cD(\ov{G_+})\oplus\cD(\ov{G_-})$ be such that 
	$[u]_\coLambda = 0$. The proof of $\frfR [ u ] \ge  0$
	is then split in three steps.
	
	\emph{Step 1.} 
	For any $(x,y)\in D_R\setminus \Sigma$ the value $|(\nabla u)(x,y)|^2$ 
	can be expressed in polar coordinates $(r,\varphi)$ as
	\[
		|(\nabla u)(x,y)|^2  = 
			|(\p_r u)(r,\varphi)|^2  
						+\frac{1}{r^2}|(\p_\varphi u)(r, \varphi)|^2.
	\]
	Using the above expression for the gradient we obtain the following 		
	estimate
	\begin{equation}\label{grad}
	\begin{split}
		\|\nabla u\|_{D_R}^2  
		& = 
		\int_0^{2\pi}\int_0^R |(\nabla u)(r,\varphi)|^2r\dd r \dd \varphi
		\le
		\int_0^R\frac{1}{r} \bigg(\int_0^{2\pi}  
			|(\p_\varphi u)(r,\varphi)|^2\dd \varphi\bigg) \dd r,
	\end{split}		
	\end{equation}
	in which we have thrown away a positive term in the second step. 
	Interchanging of the integrals
	in the above computation can be justified by Fubini's theorem 
	(see \eg \cite[Ch. 2, Thm. 3.1]{StSh}).
	
	\emph{Step 2.}
	Using the mapping $\phi \colon (0,R)\rightarrow \dR$ 
	as in Hypothesis~\ref{hyp:disc}~(A) we 
	define the following auxiliary function
	\[
		\frj(r) := \sqrt{1 + (r\varphi^\pp(r))^2},\qquad r\in(0,R).
	\]
	The curvilinear integral along $\Lambda$ in~\eqref{frf}
	can be rewritten in terms of the mapping 
	$\phi\colon (0,R)\arr \dR$ and the function $\frj$
	in the conventional way 
	and then further estimated with the help of assumption~\eqref{omega} 
	\begin{equation}\label{perturb} 
	\begin{split}
		(\omega [u]_\Lambda, [u]_\Lambda)_\Lambda
		 & = 
		 \int_0^R \omega(r) \frj(r) |u_+(r,\phi(r)) - u_-(r,\phi(r))|^2  \dd r\\
		&\le \int_0^R \frac{1}{2\pi r}\big|u_+(r,\phi(r)) - u_-(r,\phi(r))\big|^2 \dd r.
	\end{split}
	\end{equation}

	\emph{Step 3.}
	Define the following function
	\begin{equation}\label{Sr}
		S(r) := \int_0^{2\pi}|(\p_\varphi u)(r,\varphi)|^2 
									\dd \varphi 
				- 
			\frac{1}{2\pi}\big|u_+(r,\varphi(r)) - u_-(r,\varphi(r))\big|^2,
	\end{equation}
	where $r\in (0,R)$.	
	Thanks to the choice of $u$, for all $r\in (0,R)$ 
	the function $[0,2\pi) \ni \phi \mapsto u(r,\varphi)$
	can naturally be identified with the
	piecewise-$C^1$ function $\psi_r$ on the interval $I = (0,2\pi)$,
	%
	which by \cite[App. E]{S} belongs to $H^1(I)$.
	Moreover, the relation $S(r) = \fra_{d,\omega}[\psi_r]$
	holds with the form $\fra_{d,\omega}$ as in \eqref{pointform}, 
	where $d = 2\pi$ and $\omega = 1/2\pi$. In particular, 
	$d\omega =  2\pi /2\pi =  1$
	and by Corollary~\ref{cor:nonnegative} we obtain
	\[
		S(r) \ge 0,\qquad\text{for all}\quad r\in (0,R).
	\]
	Finally, using \eqref{grad}, \eqref{perturb} 
	and non-negativity of $S(r)$ we arrive at
	\[
		\frfR [u] \ge  \int_0^R \frac{S(r)}{r} \dd r \ge 0.
	\]
\end{proof}

\subsection{Non-negativity of $\OpIId$ for monotone $\Lambda$}
\label{ssec:abs_main}

In this subsection we obtain various
explicit sufficient conditions on $\omega$ ensuring non-negativity 
of $\OpIId$ assuming that $\Lambda$ is monotone.  
General results are illustrated with
two examples: an Archimedean spiral 
and a subinterval of the straight line in $\dR^2$. 

\begin{thm}\label{thm:absence2}
	Let a monotone piecewise-$C^1$ curve 
	$\Lambda\subset \dR^2$ be 
	parametrized via $\phi\colon (0,R) \arr \dR$, $R\in (0,+\infty]$,
	as in Definition~\ref{def:monotone}.
	Let the self-adjoint operator $\OpIId$ 
	be as in Definition~\ref{dfn:operator}
	with $\omega\in L^\infty(\Lambda;\dR)$.
	Then 
	\[
		\s(\OpIId)\subseteq [0,+\infty)
		\qquad \text{if}\quad
		\omega(r) \le \frac{1}{2\pi r\sqrt{1+ (r\phi^\pp(r))^2}},
		\quad\text{for}
		\quad r\in(0,R).
	\]
	If $\omega$ is majorized as above, and additionally, the domain
	$\RcoLambda$ is quasi-conical, then $\s(\OpIId) = [0,+\infty)$.
\end{thm}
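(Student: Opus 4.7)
The plan is to deduce the non-negativity of the form $\frmIId$ from the local bound supplied by Lemma~\ref{lem:estimate}, and then invoke the min-max principle; the second (equality) assertion will follow by combining with Proposition~\ref{prop:ess1}.

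By Proposition~\ref{prop:form}, $F_\Lambda$ is a core for $\frmIId$, so it suffices to show $\frmIId[u,u]\ge 0$ for every $u\in F_\Lambda$. I will split
\[
	\|\nabla u\|_{\dR^2}^2 = \|\nabla u\|_{D_R}^2 + \|\nabla u\|_{\dR^2\setminus\ov{D_R}}^2,
\]
where the second summand is vacuous for $R = +\infty$ and non-negative in any case. Since the $\delta^\pp$-interaction is supported on $\Lambda\subset\ov{D_R}$, this reduces the task to proving
\[
	\|\nabla u\|_{D_R}^2 - (\omega[u]_\Lambda,[u]_\Lambda)_\Lambda\ge 0,
\]
which is exactly the conclusion of Lemma~\ref{lem:estimate}, provided the geometric framework of Hypothesis~\ref{hyp:disc} can be realised.

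To realise that framework I will take $G_\pm := \Omega_\pm\cap D_R$, with $\Omega_\pm$ and $\Sigma$ as in Hypothesis~\ref{hyp:curve}. These are disjoint subdomains of $D_R$ whose closures cover $\ov{D_R}$ and whose intersection contains $\Lambda$, so Hypothesis~\ref{hyp:disc}(B) is satisfied; items (A) and (C) are immediate from the assumptions of the theorem. For $u = u_+\oplus u_-\in F_\Lambda$, each $u_\pm$ is by definition the restriction to $\Omega_\pm$ of some function in $\cD(\dR^2)$, hence $u_\pm|_{G_\pm}\in\cD(\ov{G_\pm})$. Moreover, the inclusion $(\ov{G_+}\cap\ov{G_-})\setminus\Lambda\subset\coLambda$ together with the defining condition $[u]_\coLambda = 0$ of $F_\Lambda$ yields the vanishing-jump requirement of Lemma~\ref{lem:estimate} on $(\ov{G_+}\cap\ov{G_-})\setminus\Lambda$. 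The lemma then delivers the desired inequality.

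Combining, $\frmIId[u,u]\ge 0$ for all $u\in F_\Lambda$, hence $\OpIId\ge 0$ by the min-max principle, which gives $\s(\OpIId)\subseteq[0,+\infty)$. Under the additional quasi-conicality assumption on $\RcoLambda$, Proposition~\ref{prop:ess1} supplies the reverse inclusion and equality follows. The conceptual content is entirely inside Lemma~\ref{lem:estimate}; the delicate point is to ensure that the domains $G_\pm=\Omega_\pm\cap D_R$ meet the piecewise-$C^1$ regularity stipulated in Hypothesis~\ref{hyp:disc}(B) along $\coLambda$, where Hypothesis~\ref{hyp:curve} only guarantees Lipschitz regularity of $\Sigma$. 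Fortunately this is not an obstacle for the application, since the proof of the lemma only uses polar coordinates, Fubini's theorem, and the parametrisation of $\Lambda$ itself.
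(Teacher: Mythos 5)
Your proposal is correct and follows essentially the same route as the paper: the identical choice $G_\pm=\Omega_\pm\cap D_R$, the splitting of $\|\nabla u\|^2_{\dR^2}$ into the disc and its complement, reduction to Lemma~\ref{lem:estimate} on the core $F_\Lambda$, and the appeal to Proposition~\ref{prop:ess1} for the quasi-conical case (which is indeed the right reference there, since $\Lambda$ need not be bounded).
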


\begin{proof}
	Let $\Sigma$ and $\Omega_\pm$ be as in Hypothesis~\ref{hyp:curve}.
	Without loss of generality we assume that $\xx_0 = 0$
	in Definition~\ref{def:monotone}.
	
	Let us define the complement 
	$\Omega_{\rm c} := \dR^2\setminus\ov{D_R}$ of the disc $D_R$,
	the curve $\Gamma := \Sigma\cap D_R$, and
	the domains $G_\pm := \Omega_\pm\cap D_R$.
	It is straightforward to see that the tuple 
	$\{D_R, G_+, G_-, \Lambda,\omega\}$ satisfies Hypothesis~\ref{hyp:disc}.
	
	Let $u \in F_\Lambda$ and define $u_R := u\uhr D_R$,
	$u_{\rm c} := u\uhr\Omega_{\rm c}$.
	Then it holds that
	\[
		u_R \in \cD(\ov{G_+})\oplus\cD(\ov{G_-}) 
		\qquad\text{and}\qquad
		[u_R]_{\Gamma\setminus\Lambda} = 0.
	\]
	Hence, using Lemma~\ref{lem:estimate} we get
	\[
		\frmIId [u,u] 
		= 
		\frfR [u_R] + \|\nabla u_{\rm c}\|^2_{\Omega_{\rm c}}
		\ge 
		\frfR [u_R] \ge 0.
	\]
	Since $F_\Lambda$ is a core for the form $\frmIId$,
	we get by \cite[Thm. 4.5.3]{D95} that
	the self-adjoint operator $\OpIId$ is non-negative.
	If, additionally, the domain $\RcoLambda$ is quasi-conical, 
	Proposition~\ref{prop:ess2} implies that
	\[
		\s(\OpIId) = [0,+\infty).
	\]
\end{proof}

\begin{example}\label{example2}
	Let the piecewise-$C^1$ curve $\Lambda\subset\dR^2$ be defined as
	\[
		\Lambda := \{(r\cos(r), r\sin(r))\in\dR^2\colon r\in\dR_+\}.
	\]
	Obviously, this curve is 
	monotone in the sense of Definition~\ref{def:monotone}
	with $\xx_0 = 0$ and $\phi(r) := r$, $r\in (0,+\infty)$.
	The curve $\Lambda$ is a special case of an Archimedean spiral.
	Theorem~\ref{thm:absence2} yields that 
	\[
		\s(\OpIId)\subseteq [0,+\infty)
		\qquad \text{if}\quad
		\omega(r) \le \frac{1}{2\pi r\sqrt{1+r^2}},
		\quad\text{for}\quad r > 0.
	\]
	%
\end{example}

The case of a non-varying interaction strength $\omega$ is 
of special interest. In the rest of this subsection we assume
for the sake of demonstrativeness that $\omega\in\dR$ is a constant.
Define also the following characteristic of a bounded monotone
piecewise-$C^1$
curve $\Lambda\subset\dR^2$ 
(parametrized as in Definition~\ref{def:monotone})
\begin{equation}\label{omega*}
	\omega_*(\Lambda) := \inf_{r\in(0,R)}
						\frac{1}{2\pi r\sqrt{1 + (r\varphi^\pp(r))^2}}.
\end{equation}					
It is not difficult to see that $0 < \omega_*(\Lambda) <+\infty$.

The following corollary is a direct consequence of Theorem~\ref{thm:absence2},
Proposition~\ref{prop:ess2}, and simple geometric argumentation.

\begin{cor}\label{cor:absence}
	Let $\Lambda\subset \dR^2$  be a bounded monotone piecewise-$C^1$ curve 
	and let the self-adjoint operator $\OpIId$ 
	be as in Definition~\ref{dfn:operator} with non-varying strength 
	$\omega\in\dR$.
	Then 
	\[
		\s(\OpIId)  = [0,+\infty)
		\qquad\text{for all}~
		\omega \in (-\infty, \omega_*]
	\]  
	with $\omega_* = \omega_*(\Lambda) > 0 $ defined in~\eqref{omega*}. 
\end{cor}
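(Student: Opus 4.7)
The plan is to assemble the corollary from three ingredients already present in the paper: the definition of $\omega_*(\Lambda)$ in~\eqref{omega*}, Theorem~\ref{thm:absence2}, and Proposition~\ref{prop:ess2}. First I would confirm the strict positivity $\omega_*(\Lambda)>0$ asserted right after~\eqref{omega*}. Boundedness of $\Lambda$ means that the monotone parametrization $\phi\colon(0,R)\to\dR$ has $R<+\infty$, and piecewise-$C^1$ regularity (understood up to the closure of each smooth piece) keeps $\phi^\pp$, and hence $r\phi^\pp(r)$, bounded on $(0,R)$. The continuous positive function $r\mapsto 2\pi r\sqrt{1+(r\phi^\pp(r))^2}$ is therefore bounded from above on $(0,R)$, so its reciprocal is bounded from below by a strictly positive constant, yielding $\omega_*(\Lambda)>0$.

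Next, for any constant $\omega\in(-\infty,\omega_*(\Lambda)]$ viewed as a constant element of $L^\infty(\Lambda;\dR)$, the very definition of $\omega_*$ as an infimum supplies the pointwise bound
\[
  \omega \le \omega_*(\Lambda) \le \frac{1}{2\pi r\sqrt{1+(r\phi^\pp(r))^2}}, \qquad r\in(0,R),
\]
so Theorem~\ref{thm:absence2} applies and delivers $\sigma(\OpIId)\subseteq[0,+\infty)$. Since $\Lambda$ is bounded, Proposition~\ref{prop:ess2} produces $\sess(\OpIId)=[0,+\infty)$, and chaining the two inclusions yields
\[
  [0,+\infty) = \sess(\OpIId) \subseteq \sigma(\OpIId) \subseteq [0,+\infty),
\]
forcing equality throughout.

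The only step requiring any real work is the verification of $\omega_*(\Lambda)>0$, which reduces to a boundedness check for $r\phi^\pp(r)$ under the paper's convention on piecewise-$C^1$ parametrizations of bounded monotone curves. No substantial obstacle is anticipated beyond this, since all remaining steps are direct invocations of Theorem~\ref{thm:absence2} and Proposition~\ref{prop:ess2}.
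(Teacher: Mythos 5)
Your proposal is correct and follows exactly the route the paper intends: the paper itself presents Corollary~\ref{cor:absence} as a direct consequence of Theorem~\ref{thm:absence2}, Proposition~\ref{prop:ess2}, and the (unproved but asserted) positivity $0<\omega_*(\Lambda)<+\infty$ from~\eqref{omega*}. Your extra care in justifying $\omega_*(\Lambda)>0$ via boundedness of $r\phi^\pp(r)$ on $(0,R)$ with $R<\infty$ fills in precisely the ``simple geometric argumentation'' the paper alludes to.
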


To illustrate this corollary we provide an example.

\begin{example}\label{example1}
	Consider the interval of length $L > 0$ in the plane: 
	\begin{equation}\label{LambdaL}
		\Lambda := \{(x,0)\in\dR^2\colon 0 < x < L\}.
	\end{equation}
	Clearly, the interval $\Lambda$
	is monotone in the sense of Definition~\ref{def:monotone}
	with $\xx_0 = 0$ and $\phi(r) = 0$, $r\in (0,L)$.
	Then we get from Corollary~\ref{cor:absence}, using 
	formula~\eqref{omega*}, that
	\[
	 	\s(\OpIId) = [0,+\infty)
	 	\qquad
		\text{for all}~\omega \in \big(-\infty,\tfrac{1}{2\pi L}\big].
	\]
\end{example}

\begin{remark}
	Let $\Lambda$ be as in \eqref{LambdaL}.
	It is worth noting that the result of the above example can
	be improved in the following way. Define the points
	$\xx_0 = (0,0)$, $\xx_1 = (0,L)$, the intervals
	\[
		\Lambda_0 := \{(x,0)\in\dR^2\colon 0 < x < L/2\},
		\qquad
		\Lambda_1 := \{(x,0)\in\dR^2\colon L/2 < x < L\},
	\]
	the discs $D_{L/2}(\xx_0)$, $D_{L/2}(\xx_1)$, 
	and the complement 
	\[	
		\Omega_{\rm c} 
		:= 
		\dR^2\setminus(\ov{D_{L/2}(\xx_0)\cup D_{L/2}(\xx_1)})
	\]
	of the closure of their union. 
	Let $u\in F_\Lambda$ and define $u_k := u\uhr D_{L/2}(\xx_k)$, 
	$k=0,1$, $u_{\rm c} := u\uhr \Omega_{\rm c}$. 
	Assuming that $\omega \in (-\infty, \frac{1}{\pi L}]$,
	we get by Lemma~\ref{lem:estimate} that
	\[
		\frmIId[u,u] = 
		\frf^{\Lambda_0}_{D_{L/2}(\xx_0),\omega}[u_0] 
		+
		\frf^{\Lambda_1}_{D_{L/2}(\xx_1),\omega}[u_1] 
		+ 
		\|\nabla u_{\rm c}\|^2_{\Omega_{\rm c}} \ge 0.
	\] 
	Thus, the operator $\OpIId$ is non-negative and by 
	Proposition~\ref{prop:ess2} we get $\s(\OpIId) = [0,+\infty)$.
\end{remark}

One may expect that for a sufficiently large coupling 
constant $\omega > 0$ or for a sufficiently long curve
$\Lambda$ negative spectrum of the self-adjoint operator
$\OpIId$ is non-empty. In the next
proposition we confirm this expectation via an example.

\begin{prop}\label{prop:interval}
	Let $\Lambda\subset\dR^2$ be as in~\eqref{LambdaL} 
	and the self-adjoint operator
	$\OpIId$ be as in Definition~\ref{dfn:operator}
	with non-varying strength $\omega\in\dR$.
	Then 
	\[
		\sd(\OpIId)\cap (-\infty,0) \ne \varnothing
		\qquad \text{for all}~
		\omega \in \big(\tfrac{\pi}{2L}, +\infty\big).
	\]
\end{prop}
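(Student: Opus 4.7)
The plan is to apply the min-max principle: since Proposition~\ref{prop:ess2} gives $\sess(\OpIId)=[0,+\infty)$, it suffices to exhibit a single trial function $u\in H^1(\RcoLambda)$ with $\frmIId[u,u]<0$, which immediately forces a point of $\sd(\OpIId)$ to lie strictly below $0$. Since $\Lambda$ is a straight segment, I would look for $u$ which is antisymmetric under reflection across the $x$-axis, so that its jump $[u]_\Lambda$ coincides (up to a factor $2$) with the boundary trace.

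Concretely, I would set
\[
	u(x,y) := \sign(y)\,\sin(\pi x/L)\,\ee^{-\alpha|y|},
	\qquad (x,y)\in (0,L)\times\dR,
\]
and extend $u$ by zero outside the strip $0<x<L$. The factor $\sin(\pi x/L)$ vanishes at $x=0$ and $x=L$, so the extension is continuous across the vertical lines $x=0,L$, and hence, restricted to the upper (resp.\ lower) half-plane, lies in $H^1$. Moreover, on $\Sigma\setminus\Lambda=\{(x,0):x\notin[0,L]\}$ both $u_+$ and $u_-$ vanish, so $[u]_\coLambda=0$; using the alternative characterisation of $H^1(\RcoLambda)$ given in the remark after~\eqref{H1} (or by explicit approximation from $F_\Lambda$), this places $u$ in $\dom\frmIId$.

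With this choice, the trace jump on $\Lambda$ is $[u]_\Lambda(x)=2\sin(\pi x/L)$, and a direct computation gives
\[
	\|\nabla u\|^2_{\dR^2} = \frac{\pi^2}{2\alpha L}+\frac{L\alpha}{2},
	\qquad
	\omega\,\|[u]_\Lambda\|^2_\Lambda = 2\omega L,
\]
so that $\frmIId[u,u]=\frac{\pi^2}{2\alpha L}+\frac{L\alpha}{2}-2\omega L$. Minimizing over $\alpha>0$ yields the optimal choice $\alpha=\pi/L$, producing
\[
	\frmIId[u,u] = \pi - 2\omega L,
\]
which is strictly negative whenever $\omega>\pi/(2L)$. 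Combined with the min-max principle and Proposition~\ref{prop:ess2}, this forces $\sd(\OpIId)\cap(-\infty,0)\neq\varnothing$.

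The computations themselves are routine; the only point requiring a bit of care is the verification that the piecewise-defined trial function $u$ really belongs to $H^1(\RcoLambda)$ rather than merely to $L^2(\dR^2)$ with $L^2$-gradient on each quadrant. I expect this to be the main (but minor) technical obstacle: it is handled either by invoking the alternative description of $H^1(\RcoLambda)$ via weak derivatives on $\RcoLambda$, or by explicitly approximating $u$ in the $\|\cdot\|_1$-norm by elements of $F_\Lambda$, e.g.\ by mollifying $u$ separately in the upper and lower half-planes and multiplying by a cutoff supported in $[\eps,L-\eps]\times\dR$ before letting $\eps\to 0$.
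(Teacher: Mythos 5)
Your argument is correct and arrives at the same threshold $\tfrac{\pi}{2L}$, but by a genuinely different route. The paper does not exhibit a trial function at all: it imposes Dirichlet conditions on the two vertical lines $\{0\}\times\dR$ and $\{L\}\times\dR$, which decouples the problem into three strips, and then computes the ground state of the middle-strip operator exactly by separation of variables, using the known lowest eigenvalue $-4\omega^2$ of the one-dimensional point $\delta^\pp$-interaction on the full line; the form ordering $\frmIId\prec\fra^\Lambda_{\omega,\rm D}$ together with $\sess(\OpIId)=[0,+\infty)$ then yields a negative eigenvalue. Your direct variational computation with $u(x,y)=\sign(y)\sin(\pi x/L)\ee^{-\alpha|y|}$ is essentially the same separated ansatz made explicit (indeed, minimizing the Rayleigh quotient rather than the form value over $\alpha$ gives $\alpha=2\omega$ and reproduces the paper's eigenvalue $\pi^2/L^2-4\omega^2$), and it is more self-contained: it avoids both the bracketing machinery and the external reference for the 1D spectrum. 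What it costs you is precisely the point you flag: $H^1(\RcoLambda)$ is \emph{defined} in the paper as the $\|\cdot\|_1$-closure of $F_\Lambda$, and the remark offering the alternative characterisation via weak derivatives on $\RcoLambda$ explicitly declines to prove the equivalence, so you should rely on your second route --- approximation by elements of $F_\Lambda$ via cutoffs near $x=0$, $x=L$ and for large $|y|$ plus mollification in each half-plane; this works because $|u|\lesssim\min\{x,L-x\}$ near the segment's endpoints, so the cutoff error vanishes in $H^1$. With that detail settled, the conclusion via the min-max principle and Proposition~\ref{prop:ess2} is exactly as in the paper.
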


\begin{proof}
	Let us split the plane $\dR^2$ into three domains
	\[
		\Omega_1 := (-\infty,0)\times \dR,
		\qquad 
		\Omega_2 := (0,L)\times \dR,
		\qquad 
		\Omega_3 := (L,+\infty)\times\dR,
	\]
	via straight lines 
	\[
		\Pi_1 := \{0\}\times\dR, \qquad \Pi_2 := \{L\}\times\dR,
	\]
	as indicated in Figure~\ref{fig}.

	\begin{figure}[h]
		 \includegraphics[width=0.5\textwidth]{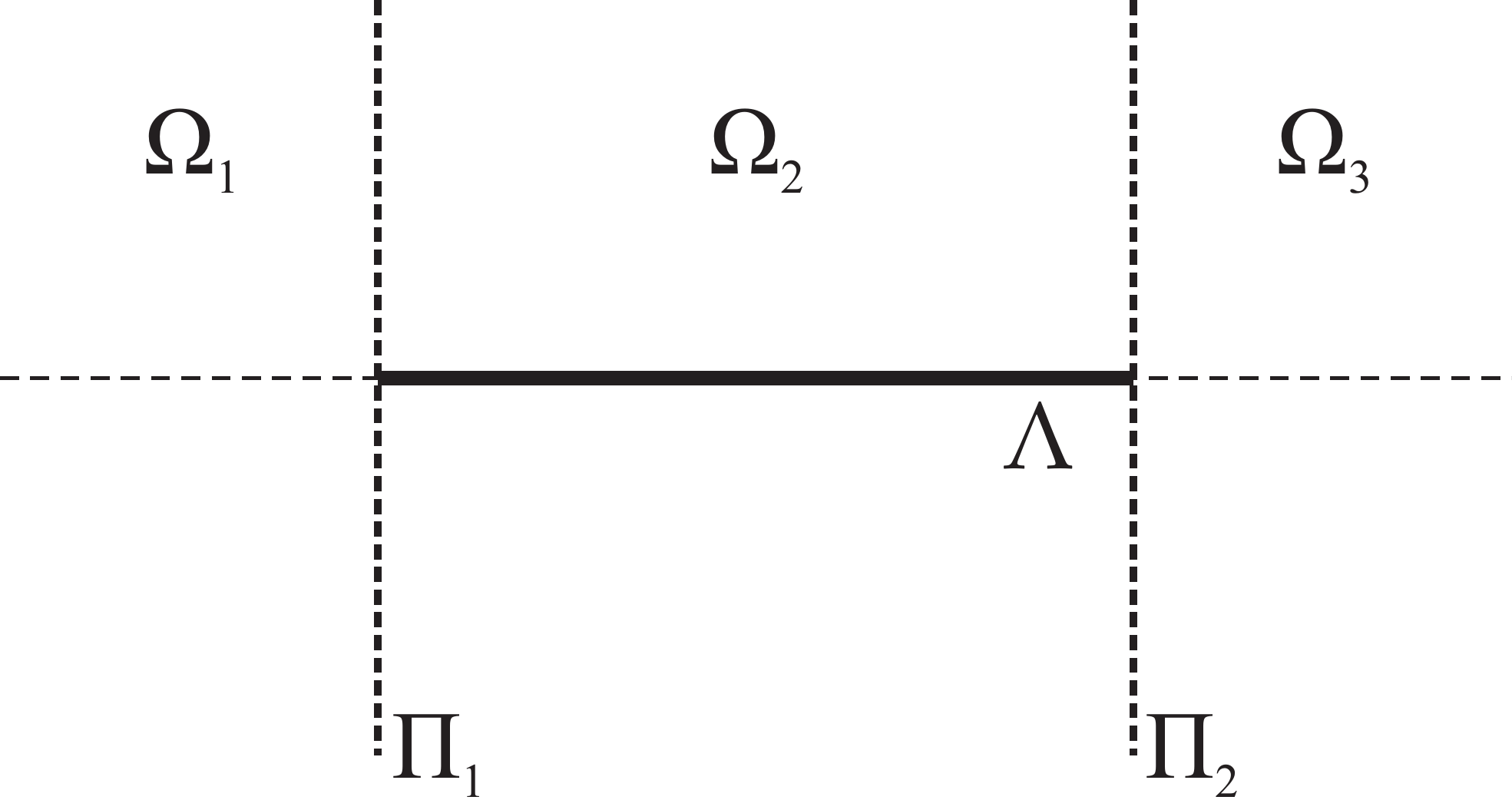}
		 \\
		 \caption{Splitting of $\dR^2$ into three domains 
		 $\{\Omega_k\}_{k=1}^3$.}\label{fig}
	\end{figure}
	
	Consider the sesquilinear form
	\[
		\fra_{\omega, \rm D}^\Lambda[u,v] := \frmIId[u,v],
		\qquad 
		\dom \fra_{\omega,\rm D}^\Lambda 
		:= 
		\big\{u\in \dom\frmIId\colon u|_{\Pi_k} = 0,~k=1,2\big\}.
	\]
	It is not difficult to check that the sesquilinear form
	$\fra_{\omega, \rm D}^\Lambda$ is closed, symmetric, densely
	defined, and semibounded in $L^2(\dR^2)$. This form induces
	via the first representation theorem the self-adjoint operator
	$\sfH_{\omega, \rm D}^\Lambda$ in $L^2(\dR^2)$, which can 
	be represented as the orthogonal sum
	$\sfH_1\oplus \sfH_2\oplus \sfH_3$ with respect to the decomposition
	$L^2(\dR^2) = \oplus_{k=1}^3 L^2(\Omega_k)$. 
	Note that $\sfH_1$ and $\sfH_3$ are both non-negative 
	and their spectra are
	given by the set $[0,+\infty)$. 
	The spectrum of $\sfH_2$ can be computed via separation of variables in the 
	strip $\Omega_2$. In particular, 
	the ground state of $\sfH_2$ corresponds to the eigenvalue
	\[
		\lambda_1(\sfH_2) = \frac{\pi^2}{L^2} -4\omega^2,
	\]
	where we used that the one-dimensional Schr\"odinger operator on the
	full-line with one-center point $\delta^\pp$-interaction of strength 
	$\omega > 0$
	has the lowest
	eigenvalue $-4\omega^2$; \cf \cite[Ch. I.4]{AGHH05},
	where not $\omega$,
	but $\beta = 1/\omega$ is called the strength of $\delta^\pp$-interaction. 
 
	If the assumption in the formulation of the proposition holds, 
	then $\lambda_1(\sfH_2) < 0$ and the operator 
	$\sfH_{\omega,\rm D}^{\Lambda}$ has at least one negative eigenvalue. 

	It remains to note that by Proposition~\ref{prop:ess2} we have
	$\sess(\OpIId) = [0,+\infty)$ and that the form ordering
	\[
		\frmIId \prec\fra_{\omega,\rm D}^\Lambda
	\]
	can easily be verified, which yields by \cite[\S 10.2, Thm. 4]{BS87} 
	that the operator $\OpIId$ has at least one negative eigenvalue.
\end{proof}

\subsection{Absence of bound states for $\OpIId$ and LFTs}
\label{ssec:abs_LFT}

In this subsection we show absence of bound states in the 
weak coupling regime for a class of bounded non-monotone piecewise-$C^1$ curves, which are 
(with minor restrictions) images of bounded monotone
curves under LFTs. Since the identical transform $M(z) = z$ is an LFT, 
this class is certainly
larger than the class of bounded monotone curves.
As an example we treat $\delta^\pp$-interaction supported
on a circular arc subtending an angle $\tt > \pi$.

First, we provide for convenience of the reader
two standard claims on change of variables under LFT.
The proofs of them are outsourced to Appendix~\ref{app:A}.

\begin{lem}\label{lem:change1}
	Let $\Lambda\subset \dR^2$ be 
	a bounded curve as in Hypothesis~\ref{hyp:curve},
	let the space $F_\Lambda$ be as in~\eqref{FLambda}, and let 
	$M\colon \wh\dC\arr\wh\dC$ be an LFT
	such that $M(\infty), M^{-1}(\infty) \notin \Lambda$.
	Then for any $u \in F_\Lambda$ 
	\begin{equation}\label{nablauv}
		\int_{\dR^2} |\nabla u|^2 \dd x  
		= 
		\int_{\dR^2} |\nabla v|^2 \dd x
	\end{equation}
	holds with $v := u \circ M$. 
\end{lem}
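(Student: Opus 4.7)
The plan is to combine Lemma~\ref{lem:M}, which yields the pointwise identity $|\nabla v(x)|^2 = |\nabla u(M(x))|^2 \J(x)$ at every $x \ne M^{-1}(\infty)$, with the standard change-of-variables formula for the diffeomorphism $M$. The affine case $c=0$ is essentially immediate: $M$ is a smooth bijection of $\dR^2$ and the identity reduces to the substitution $y=M(x)$, whose real Jacobian determinant is precisely $\J$. I would therefore focus on the generic case $c\ne 0$, where the distinguished points $M^{-1}(\infty) = -d/c$ and $M(\infty) = a/c$ both lie in $\dR^2$.

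The first step will be to verify that $v := u\circ M$ has sufficient regularity and integrability to legitimise the substitution. Because $u\in F_\Lambda\subset \cD(\ov{\Omega_+})\oplus\cD(\ov{\Omega_-})$, both components of $u$ are compactly supported in $\dR^2$; say $\supp u \subset D_R$ for some $R<\infty$. Since $M$ is a homeomorphism of $\wh\dC$ with $M(-d/c)=\infty$, there exists an open neighbourhood $U$ of $-d/c$ on which $|M(x)|>R$, so that $v\equiv 0$ on $U$. Hence $v$ is smooth near the pole of $M$, while on $\dR^2\setminus(M^{-1}(\Sigma)\cup\{-d/c\})$ it is smooth as a composition. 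A short asymptotic computation, using $M^\pp(x)=(ad-bc)/(cx+d)^2$ so that $\J(x)=|M^\pp(x)|^2 = O(|x|^{-4})$ as $|x|\to\infty$, together with the boundedness of $\nabla u$ near $M(\infty)=a/c$, then yields via Lemma~\ref{lem:M} that $|\nabla v(x)|^2 = O(|x|^{-4})$ at infinity, so that $|\nabla v|^2 \in L^1(\dR^2)$.

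With regularity in hand, the identity will follow by a direct calculation: applying Lemma~\ref{lem:M} pointwise off the negligible set $M^{-1}(\Sigma)\cup\{-d/c\}$ gives $|\nabla v|^2 = (|\nabla u|^2\circ M)\,\J$, and the substitution $y=M(x)$---a smooth diffeomorphism from $\dR^2\setminus\{-d/c\}$ onto $\dR^2\setminus\{a/c\}$ whose real Jacobian determinant is precisely $\J$---produces the chain $\int_{\dR^2}|\nabla v|^2\,\dd x = \int_{\dR^2}(|\nabla u|^2\circ M)\,\J\,\dd x = \int_{\dR^2}|\nabla u|^2\,\dd y$, the singletons $\{-d/c\}$ and $\{a/c\}$ being suppressed as sets of planar measure zero. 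The step I expect to be the main obstacle is controlling the behaviour of $v$ at the pole $-d/c$, but the compact support of $u$ forces $v$ to vanish identically in a full neighbourhood of the pole, rendering the apparent singularity of $M$ there inert. The hypothesis $M(\infty),M^{-1}(\infty)\notin\Lambda$ plays no direct role in this specific identity, but will be essential for the companion change-of-variables lemma dealing with the jump term on $\Lambda$.
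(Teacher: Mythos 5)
Your proposal is correct and follows essentially the same route as the paper's proof: apply the pointwise identity of Lemma~\ref{lem:M} off a planar null set and then invoke the change-of-variables formula for the diffeomorphism induced by $M$ between the complements of the exceptional points and of $\Sigma$. The extra care you take near the pole $M^{-1}(\infty)$ and at infinity (using the compact support of $u$ and $\J=O(|x|^{-4})$) is sound and merely makes explicit the integrability that the paper leaves implicit when citing the substitution rule.
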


\begin{remark}
	The function $v$ itself
	in the formulation of the above lemma is continuous and piecewise smooth,
	but it is not necessarily compactly supported or square-integrable.
\end{remark}

\begin{lem}\label{lem:change2}
	Let $\Lambda\subset\dR^2$ be a bounded piecewise-$C^1$ curve, 
	parametrized via the mapping $\lm\colon I \arr \dR^2$, $I := (0,L)$,
	as in Definition~\ref{def:C1},
	let the space $F_\Lambda$ be as in~\eqref{FLambda}
	and let $\omega\in L^\infty(\Lambda;\dR)$.
	For an LFT $M\colon\wh\dC\arr\wh\dC$ with $\J$ as in \eqref{J} 
	and such that $M(\infty), M^{-1}(\infty) \notin \Lambda$ define
	$\Gamma := M^{-1}(\Lambda)$, $\gamma := M^{-1} \circ \lm$
	and
	\begin{equation}\label{def:wt_omega}
		\wt\omega( \gamma( s )) := 	\omega( \lm(s)) \sqrt{\J( \gamma(s))},
		\qquad s\in I.
	\end{equation}			
	Then  the relation
	\[
		\big(\omega [u]_\Lambda, [u]_\Lambda\big)_\Lambda 	
			= 	
		\big(\wt\omega [v]_\Gamma, [v]_\Gamma\big)_\Gamma
	\]
	holds for any $u \in  F_\Lambda$ and $v := u \circ M$.	
\end{lem}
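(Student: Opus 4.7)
\medskip

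\noindent\textbf{Proof plan for Lemma~\ref{lem:change2}.}
The plan is to reduce both sides to one-dimensional integrals over the parameter interval $I=(0,L)$ via the parametrizations $\lm$ and $\gamma = M^{-1}\circ\lm$, and then to verify that the integrands coincide pointwise. Since $M$ is a homeomorphism of $\wh\dC$ by Proposition~\ref{prop:LFT} and is smooth in a neighbourhood of $\Gamma\cup\Lambda$ (we are away from both $M(\infty)$ and $M^{-1}(\infty)$), the domains $\Omega_\pm$ associated with $\Lambda$ in Hypothesis~\ref{hyp:curve} pull back under $M$ to analogous domains for $\Gamma$, so the two faces of $\Gamma$ are the $M^{-1}$-images of the corresponding two faces of $\Lambda$.

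First, I would establish the trace identity
\[
	[v]_\Gamma(\gamma(s)) = [u]_\Lambda(\lm(s)) \qquad \text{for a.e. } s\in I.
\]
For $u\in F_\Lambda$ this is immediate on the smooth part of the curves: the two restrictions $u\uhr\Omega_\pm$ are continuous up to $\Lambda$ and $v\uhr M^{-1}(\Omega_\pm) = (u\uhr\Omega_\pm)\circ M$, so the two one-sided limits of $v$ at a point $\gamma(s)\in\Gamma$ equal the corresponding two one-sided limits of $u$ at $\lm(s)\in\Lambda$.

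Secondly, I would compute the arc-length scaling. Identifying $\dR^2$ with $\dC$, the chain rule applied to $\lm = M\circ\gamma$ gives (using the Cauchy--Riemann equations~\eqref{CR} and~\eqref{J})
\[
	|\lm^\pp(s)|^2 = |M^\pp(\gamma(s))|^2\,|\gamma^\pp(s)|^2 = \J(\gamma(s))\,|\gamma^\pp(s)|^2;
\]
equivalently, this follows from the same Jacobian identity used in Lemma~\ref{lem:M}. Hence $|\lm^\pp(s)| = \sqrt{\J(\gamma(s))}\,|\gamma^\pp(s)|$ almost everywhere on $I$.

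Combining these two ingredients with the definition~\eqref{def:wt_omega} of $\wt\omega$, the standard formula for a curvilinear integral yields
\begin{equation*}
\begin{aligned}
	\big(\omega [u]_\Lambda, [u]_\Lambda\big)_\Lambda
	&= \int_0^L \omega(\lm(s)) \big|[u]_\Lambda(\lm(s))\big|^2 |\lm^\pp(s)|\, \dd s \\
	&= \int_0^L \omega(\lm(s))\sqrt{\J(\gamma(s))}\, \big|[v]_\Gamma(\gamma(s))\big|^2 |\gamma^\pp(s)|\, \dd s \\
	&= \int_0^L \wt\omega(\gamma(s)) \big|[v]_\Gamma(\gamma(s))\big|^2 |\gamma^\pp(s)|\, \dd s
	 = \big(\wt\omega [v]_\Gamma, [v]_\Gamma\big)_\Gamma,
\end{aligned}
\end{equation*}
which is the claim. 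The only non-routine step is the identification of the two one-sided traces of $v$ on $\Gamma$ with those of $u$ on $\Lambda$; once the global topological picture provided by Proposition~\ref{prop:LFT} is in place, this reduces to a pointwise continuity argument on the regular part of the curve and causes no essential difficulty since $\Lambda$ is piecewise-$C^1$ and $M$ is a diffeomorphism off the two excluded poles.
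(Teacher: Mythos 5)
Your proposal is correct and follows essentially the same route as the paper: reduce both curvilinear integrals to parameter integrals over $I$, identify the jumps via $v_\pm\circ\gamma = u_\pm\circ\lm$, and relate $|\lm^\pp|$ to $|\gamma^\pp|$ through the Jacobian factor $\sqrt{\J\circ\gamma}$. The only cosmetic difference is that you obtain $|\lm^\pp|^2=\J(\gamma)|\gamma^\pp|^2$ from $|M^\pp|^2=\J$ via the Cauchy--Riemann equations, whereas the paper derives the same identity by decomposing $\gamma^\pp$ along the orthogonal directions $\nabla M_1$ and $\nabla M_2$.
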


\begin{remark}
	Note that the function $v$ in the formulation of the above lemma
	does not belong to $F_\Gamma$
	in general. However, $v_\pm := v\uhr M^{-1}(\Omega_\pm)$
	with $\Omega_\pm$ as in Hypothesis~\ref{hyp:curve}
	are well-defined and continuous up to $\Gamma$.
	Hence, the restrictions $v_\pm|_\Gamma$
	are meaningful 
	and $[v]_\Gamma := v_+|_\Gamma - v_-|_\Gamma$ is well-defined.
\end{remark}

Now we can formulate the key result of this subsection, whose proof
with all the above preparations is rather short.

\begin{thm}\label{thm:absenceLFT}
	Let $\Lambda \subset \dR^2$ be a bounded piecewise-$C^1$ curve
	and let the self-adjoint operator $\OpIId$
	in $L^2(\dR^2)$ be as in Definition~\ref{dfn:operator}
	with non-varying strength $\omega\in\dR$.
	Suppose that there exists an LFT $M\colon \wh\dC\arr\wh\dC$ 
	such that:
	\begin{myenum}	
		\item [\rm (a)] $M(\infty), M^{-1}(\infty) \notin \Lambda$;
		\item [\rm (b)] $\Gamma := M^{-1}( \Lambda )$ is monotone.
	\end{myenum}
	Let the constant $\omega_*(\Gamma) > 0$ 
	be associated to $\Gamma$ via \eqref{omega*}.		
	Then it holds that
	\[
		\s(\OpIId) = [0,+\infty)
		\qquad
		\text{for all}
		~
		\omega \in (-\infty,\omega_*],
	\]	
	where 
	$\omega_* :=  \frac{\omega_*(\Gamma)}{\sup_{\zz\in\Gamma}\sqrt{\J(\zz)}}$.
\end{thm}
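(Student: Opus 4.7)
The plan is to combine three ingredients already prepared in the paper: the change-of-variables formulas under LFTs (Lemmas~\ref{lem:change1} and~\ref{lem:change2}), the essential spectrum computation for bounded curves (Proposition~\ref{prop:ess2}), and the non-negativity argument from the monotone case (Lemma~\ref{lem:estimate}, used exactly as in Theorem~\ref{thm:absence2}). Since $\Lambda$ is bounded, Proposition~\ref{prop:ess2} immediately gives $\sess(\OpIId) = [0,+\infty)$, so the theorem reduces to showing $\OpIId \ge 0$. By Proposition~\ref{prop:form} the space $F_\Lambda$ is a core for $\fra_\omega^\Lambda$, so it suffices to prove $\fra_\omega^\Lambda[u,u] \ge 0$ for every $u\in F_\Lambda$.

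Fix such a $u$ and set $v := u\circ M$. Combining Lemma~\ref{lem:change1} with Lemma~\ref{lem:change2} (applied with $\Gamma = M^{-1}(\Lambda)$ and $\wt\omega(\gamma(s)) = \omega\sqrt{\J(\gamma(s))}$) gives the key identity
\[
  \fra_\omega^\Lambda[u,u] \;=\; \|\nabla v\|_{\dR^2}^2 - \bigl(\wt\omega [v]_\Gamma, [v]_\Gamma\bigr)_\Gamma .
\]
By the definition $\omega_* = \omega_*(\Gamma)/\sup_{\zz\in\Gamma}\sqrt{\J(\zz)}$ together with the formula~\eqref{omega*} for $\omega_*(\Gamma)$, the assumption $\omega \le \omega_*$ translates into the pointwise bound
\[
  \wt\omega(r) \;=\; \omega\sqrt{\J} \;\le\; \omega_*(\Gamma) \;\le\; \frac{1}{2\pi r\sqrt{1+(r\varphi'(r))^2}},
  \qquad r\in(0,R),
\]
along the monotone parametrization of $\Gamma$ (after translating so that $\xx_0 = 0$). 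Thus the effective interaction strength $\wt\omega$ on the monotone curve $\Gamma$ satisfies exactly the hypothesis~\eqref{omega} of Lemma~\ref{lem:estimate}.

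It remains to show $\|\nabla v\|_{\dR^2}^2 - (\wt\omega [v]_\Gamma,[v]_\Gamma)_\Gamma \ge 0$ by the same localization argument as in the proof of Theorem~\ref{thm:absence2}. Since $\Lambda$ is bounded and $M^{-1}(\infty) \notin \Lambda$, the set $\Gamma$ is a bounded subset of $\dR^2$ and one can pick $R>0$ so large that $\Gamma \subset D_R$ and $M^{-1}(\infty) \notin \ov{D_R}$; on such a disc $M$ is a $C^\infty$-diffeomorphism, hence $v\uhr D_R$ decomposes as $v_+\oplus v_-$ with $v_\pm \in \cD(\ov{G_\pm})$, where $G_\pm := D_R \cap M^{-1}(\Omega_\pm)$, and the jump condition $[v]_{\Sigma'\setminus\Gamma}=0$ on the auxiliary curve $\Sigma' := \ov{G_+}\cap\ov{G_-}$ is inherited from $[u]_{\Sigma\setminus\Lambda}=0$. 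Lemma~\ref{lem:estimate} applied to $v\uhr D_R$ yields $\frfR[v\uhr D_R]\ge 0$, and since the gradient contribution from the exterior $\dR^2\setminus\ov{D_R}$ is non-negative, we conclude $\fra_\omega^\Lambda[u,u] \ge 0$.

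The main obstacle is the verification carried out in the last step: the transformed function $v$ is not an element of $F_\Gamma$ (it need not be compactly supported, cf.\ the remark after Lemma~\ref{lem:change2}), so one cannot invoke Theorem~\ref{thm:absence2} as a black box. The localization to a disc $D_R$ containing $\Gamma$ and avoiding $M^{-1}(\infty)$ is precisely what allows Lemma~\ref{lem:estimate} to be applied directly to $v\uhr D_R$; checking that the geometric setup of Hypothesis~\ref{hyp:disc} is satisfied by the tuple $\{D_R, G_+, G_-, \Gamma, \wt\omega\}$ constructed from the LFT image is the only delicate point, and it follows from the smoothness of $M^{-1}$ on $\ov{D_R}$ and the piecewise-$C^1$ character of $\Omega_\pm$.
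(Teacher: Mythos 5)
Your overall strategy coincides with the paper's proof: reduce to non-negativity of $\fra_\omega^\Lambda$ on the core $F_\Lambda$, transport the form to the $\Gamma$-side via Lemmas~\ref{lem:change1} and~\ref{lem:change2}, check that the transported strength $\wt\omega$ satisfies~\eqref{omega}, and then localize to a disc so that Lemma~\ref{lem:estimate} applies; the essential spectrum comes from Proposition~\ref{prop:ess2} exactly as you say. The estimate $\wt\omega \le \omega\sup_{\zz\in\Gamma}\sqrt{\J(\zz)} \le \omega_*(\Gamma)$ is also the one used in the paper.

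There is, however, one step that would fail as written: you require the disc to satisfy both $\Gamma\subset D_R$ and $M^{-1}(\infty)\notin\ov{D_R}$. These two conditions are in general incompatible. In the paper's own Example~\ref{example3} one has $M(z)=1/z$, $M^{-1}(\infty)=0$, and $\Gamma$ is the horizontal segment at height $-1/(2R)$ of length $\cot(\eps/2)/R$; every disc containing $\Gamma$ contains the origin once $\eps$ is small. Moreover, enlarging $R$ (as you propose) only makes the avoidance harder, and it also takes you outside the letter of Hypothesis~\ref{hyp:disc}, which ties the radius of the disc to the range $(0,R)$ of the monotone parametrization of $\Gamma$. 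Fortunately the condition $M^{-1}(\infty)\notin\ov{D_R}$ is not needed for the conclusion you draw from it: since $u\in F_\Lambda$ has compact support and $M(z)\to\infty$ as $z\to M^{-1}(\infty)$, the function $v=u\circ M$ vanishes identically in a neighbourhood of $M^{-1}(\infty)$, so $v\uhr D_R$ is piecewise smooth up to $\ov{G_\pm}$ regardless of where $M^{-1}(\infty)$ sits. The paper's proof simply takes $D_R$ to be the disc dictated by the monotone parametrization, namely centred at the endpoint $\gamma(0)=\xx_0$ with $R=|\gamma(L)|$, sets $G_\pm:=M^{-1}(\Omega_\pm)\cap D_R$, and verifies Hypothesis~\ref{hyp:disc} for the tuple $\{D_R,G_+,G_-,\Gamma,\wt\omega\}$ without any condition on $M^{-1}(\infty)$. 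With that single correction (drop the avoidance condition and take the disc of the monotone parametrization), your argument is the paper's argument.
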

\begin{proof}
	\emph{Step 1.}
	Suppose that the curve $\Lambda$ is parametrized via the mapping
	$\lm\colon (0,L)\arr \dR^2$ 
	(as  in Definition~\ref{def:C1}).	
	Define the mapping $\gamma := M^{-1} \circ \lambda$. 
	Due to assumption (a)
	the curve $\Gamma$ is bounded and the mapping $\gamma$
	parametrizes it.
	Without loss of generality suppose that 
	the curve $\Gamma$ is monotone in the sense of 
	Definition~\ref{def:monotone} with $\xx_0 = \gamma(0) = 0$
	and with $\phi\colon (0,R)\arr \dR$, $R = |\gamma(L)|$.
	Consider the complement 
	$\Omega_{\rm c} := \dR^2\setminus\ov{D_R}$
	of the disc $D_R$. Let the curve $\Sigma$ and the domains
	$\Omega_\pm$ be associated to $\Lambda$  
	as in Hypothesis~\ref{hyp:curve}.
	
	Define auxiliary domains $G_\pm := M^{-1}(\Omega_\pm)\cap D_R$. 	
	Thus, the splitting
	\[	
		D_R = G_+\, \dot\cup \, M^{-1}(\Sigma)\, \dot\cup \, G_-
	\] 
	holds. Let $\wt\omega$ be defined 
	via the formula~\eqref{def:wt_omega} in Lemma~\ref{lem:change2}.
	Hence, we obtain
	\[
		\wt\omega 
		\le 
		\omega \sup_{z\in\Gamma}\sqrt{\J(z)}
		\leq
		\omega_*\sup_{z\in\Gamma}\sqrt{\J(z)}
		= \omega_*(\Gamma).
	\]
	Summarizing, the tuple $\{ D_R, G_+, G_-, \Gamma, \wt\omega\}$
	fulfils Hypothesis~\ref{hyp:disc}. 
	
	\emph{Step 2.}
	Let $u \in F_\Lambda$ with $F_\Lambda$ as in~\eqref{FLambda}
	and define the composition $v := u\circ M$. 
	Set $v_R := v\uhr D_R$ and $v_{\rm c} := v\uhr \Omega_{\rm c}$.
	Using Lemmas~\ref{lem:change1} and~\ref{lem:change2} 
	we obtain
	\begin{equation}\label{eq:equality_of_forms}
	\begin{split}
		\frmIId[ u, u ] 
		& = 
		\|\nabla u\|^2_{\dR^2} - (\omega [u]_\Lambda, [u]_\Lambda)_\Lambda
		=
		\|\nabla v\|^2_{\dR^2} - (\wt\omega [v]_\Gamma, [v]_\Gamma)_\Gamma \\
		& =
		\| \nabla v_R \|^2_{D_R} - 
				(\wt\omega [v_R]_\Gamma, [v_R]_\Gamma)_\Gamma
		+ \|\nabla v_{\rm c}\|^2_{\Omega_{\rm c}}
		\ge 
		\frf^\Gamma_{D_R,\wt\omega}[ v_R, v_R ] \ge 0,		
	\end{split}	
	\end{equation}
	where we applied Lemma~\ref{lem:estimate} in the last step.
	Hence, the operator $\OpIId$ is non-negative.
	
	\emph{Step 3.}
	Since the curve $\Lambda$ is bounded, 
	Proposition~\ref{prop:ess2} applies, and we arrive at
	$\sess(\OpIId) = [0,+\infty)$. The results of Step 2 and Step 3
	imply the claim.
\end{proof}

To conclude this subsection we show that a  model of sufficiently weak 
$\delta^\pp$-interaction of non-varying strength
supported on a circular arc 
subtending the angle $2\pi - 2\eps$ ($\eps\in (0,\pi)$)
has no bound states in the weak coupling regime.
We emphasize that circular arcs subtending angles $\tt > \pi$
are non-monotone and the results of the previous subsection
do not apply to them.
\begin{figure}[h]
 	\includegraphics[width= 0.7\textwidth, trim = 0 220 0 220]{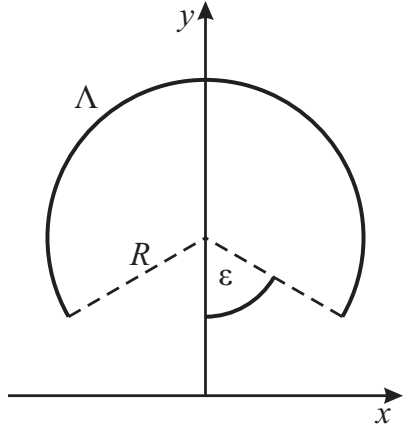}\\
	 \caption{The circular arc of radius $R >0$ subtending the angle
	 $2\pi - 2\eps$ with $\eps \in (0,\pi)$.}\label{fig2}
\end{figure}
\begin{example}\label{example3}
{\rm 
	The circular arc (see Figure~\ref{fig2}) 
	can be efficiently parametrized as follows:
	\begin{equation}\label{arc}
		\Lambda := 
		\big\{(R\sin\varphi, R(1-\cos\varphi)\in\dR^2
			\colon\varphi\in(\eps, 2\pi-\eps)
		\big\},
	\end{equation}
	where $\eps \in (0,\pi)$ and
	$R > 0$ is the radius of the underlying circle.
	Consider the LFT $M(z) := 1/z$. One easily sees that
	\[	
		M_1(x,y) = \Re M(x+\ii y) = \frac{x}{x^2+y^2},
	\]
	and according to~\eqref{J} the Jacobian $\J$ 
	of this LFT is given by the formula
	\begin{equation}\label{Jarc}
	\begin{split}
		\J(x,y) & = \big( (\p_x M_1)^2 + (\p_y M_1)^2 \big)(x,y) \\
				& = \frac{(x^2-y^2)^2}{(x^2+y^2)^4} + \frac{4x^2y^2}{(x^2+y^2)^4}
				 = \frac{1}{(x^2+y^2)^2}.
	\end{split}			
	\end{equation}
	Next observe that $M(\infty) = M^{-1}(\infty) =  0 \notin \Lambda$.
	Moreover, this LFT is inverse to itself
	and under the LFT $M^{-1}(z) = 1/z$
	the arc $\Lambda\subset\dR^2$ is mapped onto 
	the interval 
	\[	
		\Gamma := M^{-1}(\Lambda) 
		=	
		\big\{ \big(x, -\tfrac{1}{2R}\big) \in\dR^2 \colon 
								| x | < \cot(\eps/2)/ (2R) 
		\big\},
	\]
	which is obviously monotone in the sense of Definition~\ref{def:monotone}.
	Compute further $\omega_*(\Gamma)$ defined in \eqref{omega*} 
	\[
		\omega_*(\Gamma) 
		= 
		\inf_{r\in(0,|\Gamma|)} \frac{1}{2\pi r}
		=
		\frac{1}{2\pi |\Gamma|} 
		= 
		\frac{R}{2\pi\cot(\eps/2)};
	\]
	here $|\Gamma|$ in the length of $\Gamma$. Moreover, we obtain
	from \eqref{Jarc} that
	\[
		\sup_{\zz\in\Gamma} \sqrt{\J(\zz)} = 4R^2.
	\]
	Hence, Theorem~\ref{thm:absenceLFT} implies that 
	\[
		\s(\OpIId)  = [0,+\infty)\qquad\text{for all}~ 
		\omega\in \big(-\infty,	(8\pi R)^{-1} \tan(\eps/2) \big].
	\]	
	}
	\end{example}

\section{Remarks and open questions}
\label{sec:open}
In the present paper 
we have analysed
from various perspectives a new effect of absence of the negative spectrum
for Hamiltonians with $\delta^\pp$-interaction supported on 
non-closed curves in $\dR^2$. 
Quite a few questions remain open and we wish to formulate two of them.

Comparing Example~\ref{example1} and Proposition~\ref{prop:interval}
one may pose the following question.
\begin{openA}
	Let the constant $L > 0$ be fixed
	and the interval $\Lambda$ be as in~\eqref{LambdaL}. 
	The problem is to find
	the critical strength $\omega_{\rm cr}(L) > 0 $ 
	such that the operator $\OpIId$ is non-negative
	if and only if $\omega\in\dR$ satisfies $\omega \le \omega_{\rm cr}(L)$.
\end{openA}
The same question as above can be asked for other shapes of $\Lambda$,
but the authors do not expect that an exact formula for the critical
strength can be found.

On one hand, our method of the proof does not allow to cover
curves of generic shape. On the other hand, despite
many attempts, we have not found out
any example of a bounded non-closed curve, 
for which bound states in the weak coupling regime do exist.
A  general open question can be posed.
\begin{openB}
	Is it true that for any bounded sufficiently smooth non-closed curve 
	$\Lambda\subset\dR^2$ there exists a constant $\omega_* > 0$
	such that $\s(\OpIId) = [0,+\infty)$
	for all $\omega\in (-\infty,\omega_*]$?
\end{openB}
It is worth noting that the program carried out in Subsection~\ref{ssec:abs_LFT}
for linear fractional transformations can be generalized by means of Neumann bracketting to arbitrary conformal maps. This could be a possible way to answer
Question B.

Finally, we mention that 
several assumptions play only technical role and can be removed
with additional efforts. Namely, assuming  that $\Lambda$ is a subarc
of the boundary of a Lipschitz domain is technical
as well assuming that the curve $\Lambda$ is
piecewise-$C^1$ in some of the formulations
instead of just being Lipschitz.

\begin{appendix}

\section{Proofs of Lemmas~\ref{lem:change1} and~\ref{lem:change2}}
\label{app:A}

\begin{proof}[Proof of Lemma~\ref{lem:change1}]
	Consider the following two open sets 
	\[
		E := \dR^2 \setminus (\{M(\infty)\}\cup M(\Lambda))
		\quad\text{and}\quad
		F := \dR^2 \setminus  \Lambda.
	\]
	By the formula in Lemma~\ref{lem:M}
	and using that $\dR^2 \setminus E$ is a null set we get
	\[
		\int_{\dR^2} |\nabla v|^2\dd x 
		=  
		\int_E |\nabla v|^2\dd x 
		= 
		\int_E |(\nabla u)\circ M|^2\J\dd x.
	\]
	According to Proposition~\ref{prop:LFT} we have that
	$M^{-1} \colon E\rightarrow F$ 	is a bijection
	which is additionally everywhere differentiable in $E$; \cf
	\eqref{CR}.	Hence, we can apply the substitution rule 
	for Lebesgue integrals (\eg~\cite[Thm. 8.21, Cor. 8.22]{Leoni}) 
	and get
	\[
	\begin{split}
		\int_{\dR^2} |\nabla v|^2\dd x 
		& = 
		\int_F | (\nabla u)\circ M\circ M^{-1} |^2 \J (\J)^{-1} \dd x \\
		& = 
		\int_F |\nabla u|^2 \dd x
		=
		\int_{\dR^2} |\nabla u|^2\dd x;
	\end{split}	
	\]
	where in the last step we employed that $\dR^2\setminus F$ is a null set.
\end{proof}

\begin{proof}[Proof of Lemma~\ref{lem:change2}]
	Observe first that by definition of the curvilinear integral we have
	\begin{equation}\label{wtomegauv}		
		\big(\wt\omega [v]_\Gamma,[v]_\Gamma\big)_\Gamma	
		=	
		\int_0^L \wt\omega( \gamma(s) ) 
		|v_+(\gamma(s)) - v_-(\gamma(s)) |^2 |\gamma^\prime(s)| \dd s.
	\end{equation}
	Using elementary composition rules we also note
	\begin{equation}\label{vu}
	\begin{split}
		v_+( \gamma(s) ) - v_-( \gamma(s) )
		&= (u_+ \circ M \circ M^{-1} \circ \lm)(s) -
		(u_- \circ M \circ M^{-1} \circ \lm)(s) \\
		& = 
		u_+(\lm(s)) - u_-(\lm(s)),
	\end{split}	
	\end{equation}
	where $u_\pm =  u\uhr\Omega_\pm$ and $v_\pm = v\uhr M^{-1}(\Omega_\pm)$.
	Observe also that $\lambda = M\circ \gamma$.
	Using~\eqref{J} and~\eqref{relation} we obtain
	\[
	\begin{split}
		|\lm^\prime(s)|^2
		& =
		\big( (\nabla M_1\circ\gamma)(s) \cdot \gamma^\prime(s) \big)^2
		+
		\big( (\nabla M_2\circ\gamma)(s) \cdot \gamma^\prime(s) \big)^2\\
		& = 
		|(\nabla M_1\circ\gamma)(s)|^2 \cdot |\gamma^\prime(s)|^2\cos^2\alpha
		+
		|(\nabla M_2\circ\gamma)(s)|^2 \cdot |\gamma^\prime(s)|^2\sin^2\alpha\\
		& = 
		\J (\gamma(s)) \cdot |\gamma^\prime(s)|^2,
	\end{split}		
	\]
	where $\alpha$ is the angle between $\nabla M_1$
	and $\gamma^\prime$. Thanks to \eqref{wtomegauv} and using 
	\eqref{vu}	we arrive at
	\[
	\begin{split}
		\big( \wt\omega [v]_\Gamma,	[v]_\Gamma \big)_\Gamma
		 & = 
 		\int_0^L \wt\omega(\gamma(s))
			|v_+(\gamma(s)) - v_-(\lm(s))|^2\, |\gamma^\prime(s)|\dd s\\
		& = 
		\int_0^L \frac{\wt\omega( \gamma(s) )}{\sqrt{\J(\gamma(s))}}\,
			|u_+(\lm(s)) - u_-(\lm(s))|^2\, |\lm^\prime(s)|\dd s.
	\end{split}		
	\]
	Finally, employing~\eqref{def:wt_omega} we end up with
	the desired relation
	\[	
		\big(\wt\omega [v]_\Gamma,	[v]_\Gamma \big)_\Gamma
			= \big(\omega [u]_\Lambda, [u]_\Lambda\big)_\Lambda.
	\]
\end{proof}

\end{appendix}


\def\cprime{$'$} \def\cprime{$'$}


\begin{thebibliography}{10}

\bibitem{AGHH05}
S.~Albeverio, F.~Gesztesy, R.~H{\o}egh-Krohn and H.~Holden,
{\em Solvable models in quantum mechanics. With an appendix by Pavel Exner.}
2nd edition,
AMS Chelsea Publishing, Providence, RI, 2005.

\bibitem{AGS87}
J.-P.~Antoine, F.~Gesztesy and J.~Shabani,
Exactly solvable models of sphere interactions in quantum mechanics.
\textit{J.\ Phys.\ A} \textbf{20} (1987), 3687--3712.

\bibitem{BEL14_RMP}
J.~Behrndt, P.~Exner and V.~Lotoreichik,
Schr\"odinger operators with $\delta$- and $\delta'$-interactions on
Lipschitz surfaces and chromatic numbers of associated partitions.
\textit{Rev.\ Math.\ Phys.} \textbf{26} (2014), 1450015, 43~pp.

\bibitem{BGLL15}
J.~Behrndt, G.~Grubb, M.~Langer and V.~Lotoreichik,
Spectral asymptotics for resolvent differences of elliptic operators with $\delta$
and $\delta'$-interactions on hypersurfaces.
\textit{J.\ Spectr.\ Theory} \textbf{5} (2015), 697--729.

\bibitem{BLL13_AHP}
J.~Behrndt, M.~Langer and V.~Lotoreichik,
Schr\"odinger operators with $\delta$ and $\delta'$-potentials
supported on hypersurfaces.
\textit{Ann.\ Henri Poincar\'e} \textbf{14} (2013), 385--423.

\bibitem{BS87}
M.\,Sh.~Birman and M.\,Z.~Solomjak,
{\em Spectral theory of selfadjoint operators in {H}ilbert space}.
 Mathematics and its Applications (Soviet Series), Dordrecht, 1987.

\bibitem{BN13}
J.\,F.~Brasche and L.~Nizhnik,
One-dimensional Schr\"odinger operators with
$\delta'$-interactions on a set of {L}ebesgue measure zero.
{\em Oper. Matrices} \textbf{7} (2013), 887--904.

\bibitem{D95}
E.\,B.~Davies,
{\em Spectral theory and differential operators.}
Cambridge University Press, Cambridge, 1995.

\bibitem{D12}
R.~Duduchava,
Extension of functions from hypersurfaces with boundary.
\emph{Complex Var. Elliptic Equ.} \textbf{57} (2012), 625--651.

\bibitem{EKMT14}
J.~Eckhardt, A.~Kostenko, M.~Malamud, and G.~Teschl,
One-dimensional {S}chr\"odinger operators with
$\delta'$-interactions on {C}antor-type sets.
\emph{J.\ Differ.\ Equations} \textbf{257} (2014), 415--449.

\bibitem{EE}
D.\,E.~Edmunds and W.\,D.~Evans,
{\em Spectral theory and differential operators}.
Oxford University Press, New York, 1987.



\bibitem{E08}
P.~Exner,
Leaky quantum graphs: a review.
In \textit{Analysis on graphs and its applications},
Proc.\ Sympos.\ Pure Math.\ 77,
Amer.\ Math.\ Soc., Providence, RI, 2008, 523--564.

\bibitem{EF09}
P.~Exner and M.~Fraas,
On geometric perturbations of critical {S}chr\"odinger operators with
a surface interaction.
{\em J. Math. Phys.} \textbf{50} (2009), 112101, 12~pp.

\bibitem{EJ13}
P.~Exner and M.~Jex,
Spectral asymptotics of a strong {$\delta'$} interaction on a planar
loop.
\emph{J.\ Phys.\ A, Math. \ Theor.} \textbf{46} (2013), 345201, 12~pp.

\bibitem{EJ14}
P.~Exner and M.~Jex,
Spectral asymptotics of a strong {$\delta'$} interaction supported by
a surface.
\textit{Phys.\ Lett.\ A} \textbf{378} (2014), 2091--2095.

\bibitem{EKh15}
P.~Exner and A.~Khrabustovskyi,
On the spectrum of narrow Neumann waveguide with periodically
distributed $\delta^\prime$-traps.
\emph{J. Phys. A,\ Math. \ Theor.} \textbf{48} (2015), 315301, 13~pp.

\bibitem{EK08}
P.~Exner and S.~Kondej,
Hiatus perturbation for a singular {S}chr\"odinger operator with an
interaction supported by a curve in $\dR^3$.
\emph{J.\ Math.\ Phys.} \textbf{49} (2008), 032111, 19~pp.

\bibitem{ENZ01}
P.~Exner, H.~Neidhardt and V.~A. Zagrebnov,
Potential approximations to {$\delta'$}: an inverse {K}lauder
phenomenon with norm-resolvent convergence.
\emph{Comm.\ Math.\ Phys.} \textbf{224} (2001), 593--612.

\bibitem{ET04}
P.~Exner and M.~Tater,
Spectra of soft ring graphs.
\emph{Waves Random Media} \textbf{14} (2004), S47--S60.


\bibitem{FK96}
A.~Figotin and P.~Kuchment,
Band-gap structure of spectra of periodic dielectric and acoustic
media. {II}. {T}wo-dimensional photonic crystals.
\emph{SIAM\ J.\ Appl.\ Math.} \textbf{56} (1996), 1561--1620.

\bibitem{GM08}
F.~Gesztesy and M.~Mitrea,
Generalized {R}obin boundary conditions, {R}obin-to-{D}irichlet maps,
and {K}rein-type resolvent formulas for {S}chr\"odinger operators on bounded
{L}ipschitz domains.
In {\em Perspectives in partial differential equations, harmonic
analysis and applications},
{\em Proc. Sympos. Pure Math.} (2008), 105--173.

\bibitem{Gl}
I.\,M.~Glazman,
{\em Direct methods of qualitative spectral analysis of singular
differential operators}.
New York, 1966.

\bibitem{GM09}
Y.\,~D.~Golovati{\u\i} and S.\,S.~Man{\cprime}ko,
Exact models for {S}chr\"odinger operators with {$\delta'$}-like
potentials.
{\em Ukr.\ Mat.\ Visn.} \textbf{6} (2009), 173--207.

\bibitem{J15}
M.~Jex,
Spectral asymptotics for a $\delta^\prime$ interaction supported by
an infinite curve. In
{\em Mathematical Results in Quantum Mechanics}, 2014.

\bibitem{Kato}
T.~{Kato},
\textit{Perturbation theory for linear operators}.
Reprint of the 1980 edition.
Classics in Mathematics, Springer-Verlag, Berlin, 1995.

\bibitem{KL14}
S.~Kondej and V.~Lotoreichik,
Weakly coupled bound state of 2-{D} {S}chr\"odinger operator with
potential-measure.
{\em J.\ Math.\ Anal.\ Appl.} \textbf{420} (2014), 1416--1438.

\bibitem{KM14}
A.~Kostenko and M.~Malamud,
Spectral theory of semibounded {S}chr\"odinger operators with
{$\delta'$}-interactions.
{\em Ann.\ Henri\ Poincar\'e} \textbf{15} (2014), 501--541.

\bibitem{KM10}
A.~Kostenko and M.~Malamud,
1-{D} {S}chr\"odinger operators with local point interactions on a
discrete set.
{\em J.\ Differ.\ Equations} \textbf{249} (2010), 253--304.

\bibitem{Krantz}
S.\,G.~Krantz, {\em A guide to complex variables}, 
Mathematical Association of America, Washington, DC, 2008.

\bibitem{KLu14}
D.~Krej{\v{c}}i{\v{r}}{\'\i}k and Z.~Lu,
Location of the essential spectrum in curved quantum layers.
{\em J.\ Math.\ Phys.} \textbf{55} (2014), 083520, 13~pp.

\bibitem{Leoni}
G.~Leoni,
{\em A first course in {S}obolev spaces}, 
American Mathematical Society, Providence, 2009.

\bibitem{LR15}
V.~Lotoreichik and J.~Rohleder,
An eigenvalue inequality for Schr\"odinger operators with $\delta$-
and $\delta^\prime$-interactions supported on hypersurfaces.
{\em Oper. Theor. Adv. Appl.} \textbf{247} (2015), 173--184.

\bibitem{MPS15}
A.~Mantile, A.~Posilicano and M.~Sini,
Self-adjoint elliptic operators with boundary conditions on not
closed hypersurfaces. {\em arXiv:1505.07236}, 2015.

\bibitem{M87}
J.~Marschall,
The trace of {S}obolev-{S}lobodeckij spaces on {L}ipschitz domains.
{\em Manuscripta\ Math.} \textbf{58} (1987), 47--65.

\bibitem{McL}
W.~McLean,
\textit{Strongly elliptic systems and boundary integral equations}.
Cambridge University Press, Cambridge, 2000.

\bibitem{S}
K.~Schm{\"u}dgen,
{\em Unbounded self-adjoint operators on Hilbert space}.
Springer, 2012.

\bibitem{S88}
J.~Shabani,
Some properties of the Hamiltonian describing a finite number of
{$\delta'$}-interactions with support on concentric spheres.
{\em Nuovo\ Cimento\ B (11)} \textbf{101} (1988), 429--437.

\bibitem{S76}
B.~Simon,
The bound state of weakly coupled {S}chr\"odinger operators in one
and two dimensions.
{\em Ann. Physics} \textbf{97} (1976), 279--288.

\bibitem{St}
E.\,M.~Stein,
{\em Singular integrals and differentiability properties of
functions}.
Princeton University Press,
Princeton, 1970.

\bibitem{StSh}
E.\,M.~Stein and R.~Shakarchi,
{\em Real analysis. Measure theory, integration, and Hilbert spaces}.
Princeton University Press,  Princeton, NJ, 2005.

\bibitem{W}
J.~Weidmann,
{\em Linear operators in {H}ilbert spaces}, 
Springer-Verlag, New York-Berlin, 1980.

\end{thebibliography}
\end{document}